\newtheorem{theorem}{Theorem}
\newtheorem{assumption}{Assumption}
\newtheorem{axiom}{Axiom}
\newtheorem{conjecture}{Conjecture}
\newtheorem{corollary}{Corollary}
\newtheorem{definition}{Definition}
\newtheorem{example}{Example}
\newtheorem{exercise}[theorem]{Exercise}
\newtheorem{lemma}{Lemma}
\newtheorem{proposition}{Proposition}
\newtheorem{remark}{Remark}
\let\pdfoutput=\undefined\fi
\chardef\@x10\chardef\@xv60
\def\tcitime{
\def\@time{%
  \@minute\time\@hour\@minute\divide\@hour\@xv
  \ifnum\@hour<\@x 0\fi\the\@hour:%
  \multiply\@hour\@xv\advance\@minute-\@hour
  \ifnum\@minute<\@x 0\fi\the\@minute
  }}%
\def\x@hyperref#1#2#3{%
   \catcode`\~ = 12
   \catcode`\$ = 12
   \catcode`\_ = 12
   \catcode`\# = 12
   \catcode`\& = 12
   \catcode`\% = 12
   \y@hyperref{#1}{#2}{#3}%
}
\def\y@hyperref#1#2#3#4{%
   #2\ref{#4}#3
   \catcode`\~ = 13
   \catcode`\$ = 3
   \catcode`\_ = 8
   \catcode`\# = 6
   \catcode`\& = 4
   \catcode`\% = 14
}
\def\QCTOpt[#1]#2{%
  \def\QCTOptB{#1}
  \def\QCTOptA{#2}
}
\def\QCTNOpt#1{%
  \def\QCTOptA{#1}
  \let\QCTOptB\empty
}
\def\Qct{%
  \@ifnextchar[{%
    \QCTOpt}{\QCTNOpt}
}
\def\QCBOpt[#1]#2{%
  \def\QCBOptB{#1}%
  \def\QCBOptA{#2}%
}
\def\QCBNOpt#1{%
  \def\QCBOptA{#1}%
  \let\QCBOptB\empty
}
\def\Qcb{%
  \@ifnextchar[{%
    \QCBOpt}{\QCBNOpt}%
}
\def\PrepCapArgs{%
  \ifx\QCBOptA\empty
    \ifx\QCTOptA\empty
      {}%
    \else
      \ifx\QCTOptB\empty
        {\QCTOptA}%
      \else
        [\QCTOptB]{\QCTOptA}%
      \fi
    \fi
  \else
    \ifx\QCBOptA\empty
      {}%
    \else
      \ifx\QCBOptB\empty
        {\QCBOptA}%
      \else
        [\QCBOptB]{\QCBOptA}%
      \fi
    \fi
  \fi
}
\def\GRAPHICSPS#1{%
 \ifcase\GRAPHICSTYPE
   \special{ps: #1}%
 \or
   \special{language "PS", include "#1"}%
 \fi
}%
\def\graffile#1#2#3#4{%
    \bgroup
	   \@inlabelfalse
       \leavevmode
       \@ifundefined{bbl@deactivate}{\def~{\string~}}{\activesoff}%
        \raise -#4 \BOXTHEFRAME{%
           \hbox to #2{\raise #3\hbox to #2{\null #1\hfil}}}%
    \egroup
}%
\def\draftbox#1#2#3#4{%
 \leavevmode\raise -#4 \hbox{%
  \frame{\rlap{\protect\tiny #1}\hbox to #2%
   {\vrule height#3 width\z@ depth\z@\hfil}%
  }%
 }%
}%
\let\nographics=\@msidraft
\newif\ifwasdraft
\def\GRAPHIC#1#2#3#4#5{%
   \ifnum\@msidraft=\@ne\draftbox{#2}{#3}{#4}{#5}%
   \else\graffile{#1}{#3}{#4}{#5}%
   \fi
}
\def\addtoLaTeXparams#1{%
    \edef\LaTeXparams{\LaTeXparams #1}}%
\newif\ifBoxFrame \BoxFramefalse
\newif\ifOverFrame \OverFramefalse
\newif\ifUnderFrame \UnderFramefalse
\def\BOXTHEFRAME#1{%
   \hbox{%
      \ifBoxFrame
         \frame{#1}%
      \else
         {#1}%
      \fi
   }%
}
\def\doFRAMEparams#1{\BoxFramefalse\OverFramefalse\UnderFramefalse\readFRAMEparams#1\end}%
\def\readFRAMEparams#1{%
 \ifx#1\end%
  \let\next=\relax
  \else
  \ifx#1i\dispkind=\z@\fi
  \ifx#1d\dispkind=\@ne\fi
  \ifx#1f\dispkind=\tw@\fi
  \ifx#1t\addtoLaTeXparams{t}\fi
  \ifx#1b\addtoLaTeXparams{b}\fi
  \ifx#1p\addtoLaTeXparams{p}\fi
  \ifx#1h\addtoLaTeXparams{h}\fi
  \ifx#1X\BoxFrametrue\fi
  \ifx#1O\OverFrametrue\fi
  \ifx#1U\UnderFrametrue\fi
  \ifx#1w
    \ifnum\@msidraft=1\wasdrafttrue\else\wasdraftfalse\fi
    \@msidraft=\@ne
  \fi
  \let\next=\readFRAMEparams
  \fi
 \next
 }%
\def\IFRAME#1#2#3#4#5#6{%
      \bgroup
      \let\QCTOptA\empty
      \let\QCTOptB\empty
      \let\QCBOptA\empty
      \let\QCBOptB\empty
      #6%
      \parindent=0pt
      \leftskip=0pt
      \rightskip=0pt
      \setbox0=\hbox{\QCBOptA}%
      \@tempdima=#1\relax
      \ifOverFrame
          \typeout{This is not implemented yet}%
          \show\HELP
      \else
         \ifdim\wd0>\@tempdima
            \advance\@tempdima by \@tempdima
            \ifdim\wd0 >\@tempdima
               \setbox1 =\vbox{%
                  \unskip\hbox to \@tempdima{\hfill\GRAPHIC{#5}{#4}{#1}{#2}{#3}\hfill}%
                  \unskip\hbox to \@tempdima{\parbox[b]{\@tempdima}{\QCBOptA}}%
               }%
               \wd1=\@tempdima
            \else
               \textwidth=\wd0
               \setbox1 =\vbox{%
                 \noindent\hbox to \wd0{\hfill\GRAPHIC{#5}{#4}{#1}{#2}{#3}\hfill}\\%
                 \noindent\hbox{\QCBOptA}%
               }%
               \wd1=\wd0
            \fi
         \else
            \ifdim\wd0>0pt
              \hsize=\@tempdima
              \setbox1=\vbox{%
                \unskip\GRAPHIC{#5}{#4}{#1}{#2}{0pt}%
                \break
                \unskip\hbox to \@tempdima{\hfill \QCBOptA\hfill}%
              }%
              \wd1=\@tempdima
           \else
              \hsize=\@tempdima
              \setbox1=\vbox{%
                \unskip\GRAPHIC{#5}{#4}{#1}{#2}{0pt}%
              }%
              \wd1=\@tempdima
           \fi
         \fi
         \@tempdimb=\ht1
         \advance\@tempdimb by -#2
         \advance\@tempdimb by #3
         \leavevmode
         \raise -\@tempdimb \hbox{\box1}%
      \fi
      \egroup%
}%
\def\DFRAME#1#2#3#4#5{%
  \vspace\topsep
  \hfil\break
  \bgroup
     \leftskip\@flushglue
	 \rightskip\@flushglue
	 \parindent\z@
	 \parfillskip\z@skip
     \let\QCTOptA\empty
     \let\QCTOptB\empty
     \let\QCBOptA\empty
     \let\QCBOptB\empty
	 \vbox\bgroup
        \ifOverFrame 
           #5\QCTOptA\par
        \fi
        \GRAPHIC{#4}{#3}{#1}{#2}{\z@}%
        \ifUnderFrame 
           \break#5\QCBOptA
        \fi
	 \egroup
  \egroup
  \vspace\topsep
  \break
}%
\def\FFRAME#1#2#3#4#5#6#7{%
  \@ifundefined{floatstyle}
    {
     \begin{figure}[#1]%
    }
    {
	 \ifx#1h
      \begin{figure}[H]%
	 \else
      \begin{figure}[#1]%
	 \fi
	}
  \let\QCTOptA\empty
  \let\QCTOptB\empty
  \let\QCBOptA\empty
  \let\QCBOptB\empty
  \ifOverFrame
    #4
    \ifx\QCTOptA\empty
    \else
      \ifx\QCTOptB\empty
        \caption{\QCTOptA}%
      \else
        \caption[\QCTOptB]{\QCTOptA}%
      \fi
    \fi
    \ifUnderFrame\else
      \label{#5}%
    \fi
  \else
    \UnderFrametrue%
  \fi
  \begin{center}\GRAPHIC{#7}{#6}{#2}{#3}{\z@}\end{center}%
  \ifUnderFrame
    #4
    \ifx\QCBOptA\empty
      \caption{}%
    \else
      \ifx\QCBOptB\empty
        \caption{\QCBOptA}%
      \else
        \caption[\QCBOptB]{\QCBOptA}%
      \fi
    \fi
    \label{#5}%
  \fi
  \end{figure}%
 }%
\def\makeactives{
  \catcode`\"=\active
  \catcode`\;=\active
  \catcode`\:=\active
  \catcode`\'=\active
  \catcode`\~=\active
}
   \gdef\activesoff{%
      \def"{\string"}%
      \def;{\string;}%
      \def:{\string:}%
      \def'{\string'}%
      \def~{\string~}%
    }
\def\FRAME#1#2#3#4#5#6#7#8{%
 \bgroup
 \ifnum\@msidraft=\@ne
   \wasdrafttrue
 \else
   \wasdraftfalse%
 \fi
 \def\LaTeXparams{}%
 \dispkind=\z@
 \def\LaTeXparams{}%
 \doFRAMEparams{#1}%
 \ifnum\dispkind=\z@\IFRAME{#2}{#3}{#4}{#7}{#8}{#5}\else
  \ifnum\dispkind=\@ne\DFRAME{#2}{#3}{#7}{#8}{#5}\else
   \ifnum\dispkind=\tw@
    \edef\@tempa{\noexpand\FFRAME{\LaTeXparams}}%
    \@tempa{#2}{#3}{#5}{#6}{#7}{#8}%
    \fi
   \fi
  \fi
  \ifwasdraft\@msidraft=1\else\@msidraft=0\fi{}%
  \egroup
 }%
\def\TEXUX#1{"texux"}
\long\def\QQQ#1#2{%
     \long\expandafter\def\csname#1\endcsname{#2}}%
\long\def\QQA#1#2{}%
\def\QTR#1#2{{\csname#1\endcsname {#2}}}%
\def\EXPAND#1[#2]#3{}%
\def\NOEXPAND#1[#2]#3{}%
\def\LaTeXparent#1{}%
\def\ChildStyles#1{}%
\def\ChildDefaults#1{}%
\def\QTagDef#1#2#3{}%
  \providecommand{\UNICODE}[2][]{\protect\rule{.1in}{.1in}}
  \providecommand{\U}[1]{\protect\rule{.1in}{.1in}}
\def\QQfnmark#1{\footnotemark}
 \def\abstract{%
  \if@twocolumn
   \section*{Abstract (Not appropriate in this style!)}%
   \else \small 
   \begin{center}{\bf Abstract\vspace{-.5em}\vspace{\z@}}\end{center}%
   \quotation 
   \fi
  }%
   \def\registered{\relax\ifmmode{}\r@gistered
                    \else$\m@th\r@gistered$\fi}%
 \def\r@gistered{^{\ooalign
  {\hfil\raise.07ex\hbox{$\scriptstyle\rm\text{R}$}\hfil\crcr
  \mathhexbox20D}}}}{}%
\newdimen\theight
\def\newfmtname{LaTeX2e}
  \DeclareOldFontCommand{\rm}{\normalfont\rmfamily}{\mathrm}
  \DeclareOldFontCommand{\sf}{\normalfont\sffamily}{\mathsf}
  \DeclareOldFontCommand{\tt}{\normalfont\ttfamily}{\mathtt}
  \DeclareOldFontCommand{\bf}{\normalfont\bfseries}{\mathbf}
  \DeclareOldFontCommand{\it}{\normalfont\itshape}{\mathit}
  \DeclareOldFontCommand{\sl}{\normalfont\slshape}{\@nomath\sl}
  \DeclareOldFontCommand{\sc}{\normalfont\scshape}{\@nomath\sc}
\def\alpha{{\Greekmath 010B}}%
\def\beta{{\Greekmath 010C}}%
\def\gamma{{\Greekmath 010D}}%
\def\delta{{\Greekmath 010E}}%
\def\epsilon{{\Greekmath 010F}}%
\def\zeta{{\Greekmath 0110}}%
\def\eta{{\Greekmath 0111}}%
\def\theta{{\Greekmath 0112}}%
\def\iota{{\Greekmath 0113}}%
\def\kappa{{\Greekmath 0114}}%
\def\lambda{{\Greekmath 0115}}%
\def\mu{{\Greekmath 0116}}%
\def\nu{{\Greekmath 0117}}%
\def\xi{{\Greekmath 0118}}%
\def\pi{{\Greekmath 0119}}%
\def\rho{{\Greekmath 011A}}%
\def\sigma{{\Greekmath 011B}}%
\def\tau{{\Greekmath 011C}}%
\def\upsilon{{\Greekmath 011D}}%
\def\phi{{\Greekmath 011E}}%
\def\chi{{\Greekmath 011F}}%
\def\psi{{\Greekmath 0120}}%
\def\omega{{\Greekmath 0121}}%
\def\varepsilon{{\Greekmath 0122}}%
\def\vartheta{{\Greekmath 0123}}%
\def\varpi{{\Greekmath 0124}}%
\def\varrho{{\Greekmath 0125}}%
\def\varsigma{{\Greekmath 0126}}%
\def\varphi{{\Greekmath 0127}}%
\def\nabla{{\Greekmath 0272}}
\def\FindBoldGroup{%
   {\setbox0=\hbox{$\mathbf{x\global\edef\theboldgroup{\the\mathgroup}}$}}%
}
\def\Greekmath#1#2#3#4{%
    \if@compatibility
        \ifnum\mathgroup=\symbold
           \mathchoice{\mbox{\boldmath$\displaystyle\mathchar"#1#2#3#4$}}%
                      {\mbox{\boldmath$\textstyle\mathchar"#1#2#3#4$}}%
                      {\mbox{\boldmath$\scriptstyle\mathchar"#1#2#3#4$}}%
                      {\mbox{\boldmath$\scriptscriptstyle\mathchar"#1#2#3#4$}}%
        \else
           \mathchar"#1#2#3#4%
        \fi 
    \else 
        \FindBoldGroup
        \ifnum\mathgroup=\theboldgroup 
           \mathchoice{\mbox{\boldmath$\displaystyle\mathchar"#1#2#3#4$}}%
                      {\mbox{\boldmath$\textstyle\mathchar"#1#2#3#4$}}%
                      {\mbox{\boldmath$\scriptstyle\mathchar"#1#2#3#4$}}%
                      {\mbox{\boldmath$\scriptscriptstyle\mathchar"#1#2#3#4$}}%
        \else
           \mathchar"#1#2#3#4%
        \fi     	    
	  \fi}
\newif\ifGreekBold  \GreekBoldfalse
\let\SAVEPBF=\pbf
\def\pbf{\GreekBoldtrue\SAVEPBF}%
  \newcounter{equationnumber}  
  \def\mathletters{%
     \addtocounter{equation}{1}
     \edef\@currentlabel{\theequation}%
     \setcounter{equationnumber}{\c@equation}
     \setcounter{equation}{0}%
     \edef\theequation{\@currentlabel\noexpand\alph{equation}}%
  }
    \def\BibTeX{{\rm B\kern-.05em{\sc i\kern-.025em b}\kern-.08em
                 T\kern-.1667em\lower.7ex\hbox{E}\kern-.125emX}}}{}%
\def\AmS{{\protect\usefont{OMS}{cmsy}{m}{n}%
                A\kern-.1667em\lower.5ex\hbox{M}\kern-.125emS}}}{}%
\def\@@eqncr{\let\@tempa\relax
    \ifcase\@eqcnt \def\@tempa{& & &}\or \def\@tempa{& &}%
      \else \def\@tempa{&}\fi
     \@tempa
     \if@eqnsw
        \iftag@
           \@taggnum
        \else
           \@eqnnum\stepcounter{equation}%
        \fi
     \fi
     \global\tag@false
     \global\@eqnswtrue
     \global\@eqcnt\z@\cr}
\def\TCItag{\@ifnextchar*{\@TCItagstar}{\@TCItag}}
\def\@TCItag#1{%
    \global\tag@true
    \global\def\@taggnum{(#1)}%
    \global\def\@currentlabel{#1}}
\def\@TCItagstar*#1{%
    \global\tag@true
    \global\def\@taggnum{#1}%
    \global\def\@currentlabel{#1}}
\def\tint{\msi@int\textstyle\int}%
\def\tiint{\msi@int\textstyle\iint}%
\def\tiiint{\msi@int\textstyle\iiint}%
\def\tiiiint{\msi@int\textstyle\iiiint}%
\def\tidotsint{\msi@int\textstyle\idotsint}%
\def\toint{\msi@int\textstyle\oint}%
\def\tsum{\mathop{\textstyle \sum }}%
\newtoks\temptoksa
\newtoks\temptoksb
\newtoks\temptoksc
\def\msi@int#1#2{%
 \def\@temp{{#1#2\the\temptoksc_{\the\temptoksa}^{\the\temptoksb}}}%
 \futurelet\@nextcs
 \@int
}
\def\@int{%
   \ifx\@nextcs\limits
      \typeout{Found limits}%
      \temptoksc={\limits}%
	  \let\@next\@intgobble%
   \else\ifx\@nextcs\nolimits
      \typeout{Found nolimits}%
      \temptoksc={\nolimits}%
	  \let\@next\@intgobble%
   \else
      \typeout{Did not find limits or no limits}%
      \temptoksc={}%
      \let\@next\msi@limits%
   \fi\fi
   \@next   
}%
\def\@intgobble#1{%
   \typeout{arg is #1}%
   \msi@limits
}
\def\msi@limits{%
   \temptoksa={}%
   \temptoksb={}%
   \@ifnextchar_{\@limitsa}{\@limitsb}%
}
\def\@limitsa_#1{%
   \temptoksa={#1}%
   \@ifnextchar^{\@limitsc}{\@temp}%
}
\def\@limitsb{%
   \@ifnextchar^{\@limitsc}{\@temp}%
}
\def\@limitsc^#1{%
   \temptoksb={#1}%
   \@ifnextchar_{\@limitsd}{\@temp}%
}
\def\@limitsd_#1{%
   \temptoksa={#1}%
   \@temp
}
\def\dint{\msi@int\displaystyle\int}%
\def\diint{\msi@int\displaystyle\iint}%
\def\diiint{\msi@int\displaystyle\iiint}%
\def\diiiint{\msi@int\displaystyle\iiiint}%
\def\didotsint{\msi@int\displaystyle\idotsint}%
\def\doint{\msi@int\displaystyle\oint}%
\def\ExitTCILatex{\makeatother }
\if@compatibility\message{amsmath already loaded}\fi\aftergroup\ExitTCILatex}
\if@compatibility\message{amstex already loaded}\fi\aftergroup\ExitTCILatex}
\if@compatibility\message{amsgen already loaded}\fi\aftergroup\ExitTCILatex}
\let\DOTSI\relax
\def\RIfM@{\relax\ifmmode}%
\def\FN@{\futurelet\next}%
\def\iint{\DOTSI\intno@\tw@\FN@\ints@}%
\def\iiint{\DOTSI\intno@\thr@@\FN@\ints@}%
\def\iiiint{\DOTSI\intno@4 \FN@\ints@}%
\def\idotsint{\DOTSI\intno@\z@\FN@\ints@}%
\def\ints@{\findlimits@\ints@@}%
\newif\iflimtoken@
\newif\iflimits@
\def\findlimits@{\limtoken@true\ifx\next\limits\limits@true
 \else\ifx\next\nolimits\limits@false\else
 \limtoken@false\ifx\ilimits@\nolimits\limits@false\else
 \ifinner\limits@false\else\limits@true\fi\fi\fi\fi}%
\def\multint@{\int\ifnum\intno@=\z@\intdots@                          
 \else\intkern@\fi                                                    
 \ifnum\intno@>\tw@\int\intkern@\fi                                   
 \ifnum\intno@>\thr@@\int\intkern@\fi                                 
 \int}
\def\multintlimits@{\intop\ifnum\intno@=\z@\intdots@\else\intkern@\fi
 \ifnum\intno@>\tw@\intop\intkern@\fi
 \ifnum\intno@>\thr@@\intop\intkern@\fi\intop}%
\def\intic@{%
    \mathchoice{\hskip.5em}{\hskip.4em}{\hskip.4em}{\hskip.4em}}%
\def\negintic@{\mathchoice
 {\hskip-.5em}{\hskip-.4em}{\hskip-.4em}{\hskip-.4em}}%
\def\ints@@{\iflimtoken@                                              
 \def\ints@@@{\iflimits@\negintic@
   \mathop{\intic@\multintlimits@}\limits                             
  \else\multint@\nolimits\fi                                          
  \eat@}
 \else                                                                
 \def\ints@@@{\iflimits@\negintic@
  \mathop{\intic@\multintlimits@}\limits\else
  \multint@\nolimits\fi}\fi\ints@@@}%
\def\intkern@{\mathchoice{\!\!\!}{\!\!}{\!\!}{\!\!}}%
\def\plaincdots@{\mathinner{\cdotp\cdotp\cdotp}}%
\def\intdots@{\mathchoice{\plaincdots@}%
 {{\cdotp}\mkern1.5mu{\cdotp}\mkern1.5mu{\cdotp}}%
 {{\cdotp}\mkern1mu{\cdotp}\mkern1mu{\cdotp}}%
 {{\cdotp}\mkern1mu{\cdotp}\mkern1mu{\cdotp}}}%
\def\RIfM@{\relax\protect\ifmmode}
\def\text{\RIfM@\expandafter\text@\else\expandafter\mbox\fi}
\let\nfss@text\text
\def\text@#1{\mathchoice
   {\textdef@\displaystyle\f@size{#1}}%
   {\textdef@\textstyle\tf@size{\firstchoice@false #1}}%
   {\textdef@\textstyle\sf@size{\firstchoice@false #1}}%
   {\textdef@\textstyle \ssf@size{\firstchoice@false #1}}%
   \glb@settings}
\def\textdef@#1#2#3{\hbox{{%
                    \everymath{#1}%
                    \let\f@size#2\selectfont
                    #3}}}
\newif\iffirstchoice@
\def\Let@{\relax\iffalse{\fi\let\\=\cr\iffalse}\fi}%
\def\vspace@{\def\vspace##1{\crcr\noalign{\vskip##1\relax}}}%
\def\multilimits@{\bgroup\vspace@\Let@
 \baselineskip\fontdimen10 \scriptfont\tw@
 \advance\baselineskip\fontdimen12 \scriptfont\tw@
 \lineskip\thr@@\fontdimen8 \scriptfont\thr@@
 \lineskiplimit\lineskip
 \vbox\bgroup\ialign\bgroup\hfil$\m@th\scriptstyle{##}$\hfil\crcr}%
\def\Sb{_\multilimits@}%
\def\endSb{\crcr\egroup\egroup\egroup}%
\def\Sp{^\multilimits@}%
\newdimen\ex@
\def\rightarrowfill@#1{$#1\m@th\mathord-\mkern-6mu\cleaders
 \hbox{$#1\mkern-2mu\mathord-\mkern-2mu$}\hfill
 \mkern-6mu\mathord\rightarrow$}%
\def\leftarrowfill@#1{$#1\m@th\mathord\leftarrow\mkern-6mu\cleaders
 \hbox{$#1\mkern-2mu\mathord-\mkern-2mu$}\hfill\mkern-6mu\mathord-$}%
\def\leftrightarrowfill@#1{$#1\m@th\mathord\leftarrow
\mkern-6mu\cleaders
 \hbox{$#1\mkern-2mu\mathord-\mkern-2mu$}\hfill
 \mkern-6mu\mathord\rightarrow$}%
\def\overrightarrow{\mathpalette\overrightarrow@}%
\def\overrightarrow@#1#2{\vbox{\ialign{##\crcr\rightarrowfill@#1\crcr
 \noalign{\kern-\ex@\nointerlineskip}$\m@th\hfil#1#2\hfil$\crcr}}}%
\def\overleftarrow{\mathpalette\overleftarrow@}%
\def\overleftarrow@#1#2{\vbox{\ialign{##\crcr\leftarrowfill@#1\crcr
 \noalign{\kern-\ex@\nointerlineskip}$\m@th\hfil#1#2\hfil$\crcr}}}%
\def\overleftrightarrow{\mathpalette\overleftrightarrow@}%
\def\overleftrightarrow@#1#2{\vbox{\ialign{##\crcr
   \leftrightarrowfill@#1\crcr
 \noalign{\kern-\ex@\nointerlineskip}$\m@th\hfil#1#2\hfil$\crcr}}}%
\def\underrightarrow{\mathpalette\underrightarrow@}%
\def\underrightarrow@#1#2{\vtop{\ialign{##\crcr$\m@th\hfil#1#2\hfil
  $\crcr\noalign{\nointerlineskip}\rightarrowfill@#1\crcr}}}%
\def\underleftarrow{\mathpalette\underleftarrow@}%
\def\underleftarrow@#1#2{\vtop{\ialign{##\crcr$\m@th\hfil#1#2\hfil
  $\crcr\noalign{\nointerlineskip}\leftarrowfill@#1\crcr}}}%
\def\underleftrightarrow{\mathpalette\underleftrightarrow@}%
\def\underleftrightarrow@#1#2{\vtop{\ialign{##\crcr$\m@th
  \hfil#1#2\hfil$\crcr
 \noalign{\nointerlineskip}\leftrightarrowfill@#1\crcr}}}%
\def\qopnamewl@#1{\mathop{\operator@font#1}\nlimits@}
\let\nlimits@\displaylimits
\def\setboxz@h{\setbox\z@\hbox}
\def\varlim@#1#2{\mathop{\vtop{\ialign{##\crcr
 \hfil$#1\m@th\operator@font lim$\hfil\crcr
 \noalign{\nointerlineskip}#2#1\crcr
 \noalign{\nointerlineskip\kern-\ex@}\crcr}}}}
 \def\rightarrowfill@#1{\m@th\setboxz@h{$#1-$}\ht\z@\z@
  $#1\copy\z@\mkern-6mu\cleaders
  \hbox{$#1\mkern-2mu\box\z@\mkern-2mu$}\hfill
  \mkern-6mu\mathord\rightarrow$}
\def\leftarrowfill@#1{\m@th\setboxz@h{$#1-$}\ht\z@\z@
  $#1\mathord\leftarrow\mkern-6mu\cleaders
  \hbox{$#1\mkern-2mu\copy\z@\mkern-2mu$}\hfill
  \mkern-6mu\box\z@$}
\def\projlim{\qopnamewl@{proj\,lim}}
\def\injlim{\qopnamewl@{inj\,lim}}
\def\varinjlim{\mathpalette\varlim@\rightarrowfill@}
\def\varprojlim{\mathpalette\varlim@\leftarrowfill@}
\def\varliminf{\mathpalette\varliminf@{}}
\def\varliminf@#1{\mathop{\underline{\vrule\@depth.2\ex@\@width\z@
   \hbox{$#1\m@th\operator@font lim$}}}}
\def\varlimsup{\mathpalette\varlimsup@{}}
\def\varlimsup@#1{\mathop{\overline
  {\hbox{$#1\m@th\operator@font lim$}}}}
\def\align{\@verbatim \frenchspacing\@vobeyspaces \@alignverbatim
You are using the "align" environment in a style in which it is not defined.}
\let\csname endalign*\endcsname =\endtrivlist
\def\alignat{\@verbatim \frenchspacing\@vobeyspaces \@alignatverbatim
You are using the "alignat" environment in a style in which it is not defined.}
\let\csname endalignat*\endcsname =\endtrivlist
\def\xalignat{\@verbatim \frenchspacing\@vobeyspaces \@xalignatverbatim
You are using the "xalignat" environment in a style in which it is not defined.}
\let\csname endxalignat*\endcsname =\endtrivlist
\def\gather{\@verbatim \frenchspacing\@vobeyspaces \@gatherverbatim
You are using the "gather" environment in a style in which it is not defined.}
\let\csname endgather*\endcsname =\endtrivlist
\def\multiline{\@verbatim \frenchspacing\@vobeyspaces \@multilineverbatim
You are using the "multiline" environment in a style in which it is not defined.}
\let\csname endmultiline*\endcsname =\endtrivlist
\def\arrax{\@verbatim \frenchspacing\@vobeyspaces \@arraxverbatim
You are using a type of "array" construct that is only allowed in AmS-LaTeX.}
\def\tabulax{\@verbatim \frenchspacing\@vobeyspaces \@tabulaxverbatim
You are using a type of "tabular" construct that is only allowed in AmS-LaTeX.}
\let\csname endarrax*\endcsname =\endtrivlist
\let\csname endtabulax*\endcsname =\endtrivlist
 \def\endequation{%
     \ifmmode\ifinner 
      \iftag@
        \addtocounter{equation}{-1} 
        $\hfil
           \displaywidth\linewidth\@taggnum\egroup \endtrivlist
        \global\tag@false
        \global\@ignoretrue   
      \else
        $\hfil
           \displaywidth\linewidth\@eqnnum\egroup \endtrivlist
        \global\tag@false
        \global\@ignoretrue 
      \fi
     \else   
      \iftag@
        \addtocounter{equation}{-1} 
        \eqno \hbox{\@taggnum}
        \global\tag@false%
        $$\global\@ignoretrue
      \else
        \eqno \hbox{\@eqnnum}
        $$\global\@ignoretrue
      \fi
     \fi\fi
 } 
 \newif\iftag@ \tag@false
 \def\TCItag{\@ifnextchar*{\@TCItagstar}{\@TCItag}}
 \def\@TCItag#1{%
     \global\tag@true
     \global\def\@taggnum{(#1)}%
     \global\def\@currentlabel{#1}}
 \def\@TCItagstar*#1{%
     \global\tag@true
     \global\def\@taggnum{#1}%
     \global\def\@currentlabel{#1}}
     \def\tag{\@ifnextchar*{\@tagstar}{\@tag}}
     \def\@tag#1{%
         \global\tag@true
         \global\def\@taggnum{(#1)}}
     \def\@tagstar*#1{%
         \global\tag@true
         \global\def\@taggnum{#1}}
\begin{document}

\title{Large sample properties of GMM estimators under second-order
identification.}
\author{Hugo Kruiniger\thanks{%
Address: hugo.kruiniger@durham.ac.uk; Department of Economics, 1 Mill Hill
Lane, Durham DH1 3LB, England.} \\
Durham University}
\date{This version: 3 August 2026}
\maketitle

\vspace{6.2cm}

\bigskip

\bigskip

\bigskip

\noindent JEL\ classification: C12, C13, C23.\bigskip

\noindent Keywords: bias, Generalized Method of Moments (GMM), moment
conditions, optimal weight matrix, rank deficiency, rate of convergence,
second-order local identification, underidentification.

\setcounter{page}{0} \thispagestyle{empty}

\newpage

\baselineskip=20pt

\renewcommand{\baselinestretch}{1.5}

\begin{center}
\textbf{Abstract}
\end{center}

\vspace{1cm}

Dovonon and Hall (Journal of Econometrics, 2018) proposed a limiting
distribution theory for GMM\ estimators for a $p$ - dimensional globally
identified parameter vector $\phi $ when local identification conditions
fail at first-order but hold at second-order. They assumed that the
first-order underidentification is due to the expected Jacobian having rank $%
p-1$ at the true value $\phi _{0}$, i.e., having a rank deficiency of one.
After reparametrizing the model such that the last column of the Jacobian
vanishes, they showed that the GMM\ estimator of the vector comprising the
first $p-1$ parameters, $\widehat{\phi }_{1},$ converges at rate $T^{-1/2}$
and the GMM estimator of the remaining parameter, $\widehat{\phi }_{p},$
converges at rate $T^{-1/4}$. They also provided a limiting distribution of $%
T^{1/4}(\widehat{\phi }_{p}-\phi _{0,p})$ subject to a (non-transparent)
condition which they claimed to be not restrictive in general. However, as
we show in this paper, their condition is in fact only satisfied when $\phi $
is overidentified and the limiting distribution of $T^{1/4}(\widehat{\phi }%
_{p}-\phi _{0,p}),$ which is non-standard, depends on whether $\phi $ is
exactly identified or overidentified. In particular, the limiting
distributions of the sign of $T^{1/4}(\widehat{\phi }_{p}-\phi _{0,p})$ for
the cases of exact and overidentification, respectively, are different and
are obtained by using expansions of the GMM objective function of different
orders. Unsurprisingly, we find that the limiting distribution theories of
Dovonon and Hall (2018) for Indirect Inference (II) estimation under two
different scenarios with second-order identification where the target
function is a GMM\ estimator of the auxiliary parameter vector, are
incomplete for similar reasons. We discuss how our results for GMM
estimation can be used to complete both theories and in particular how they
can be used to obtain the limiting distributions of the II estimators in the
case of exact identification under either scenario. We also derive the
optimal, in the sense of limiting Mean Squared Error minimising, weight
matrices for $\widehat{\phi }_{1}$ and $\widehat{\phi }_{p},$ respectively.

\setcounter{page}{0} \thispagestyle{empty}\newpage

\section{Introduction}

Global identification is a necessary condition for consistency of an
estimator. In models that are linear in the parameters, global
identification is equivalent to first-order local identification. However,
in models that are nonlinear in the parameters, global identication of the
parameter vector may hold even when some of the parameters are not
first-order but higher order locally identified although in this case the
rate of convergence of the estimators of these parameters is slower than the
usual rate.

For the situation where one of the elements of $\phi ,$ say $\phi _{p},$ is
not first-order but only second-order locally identified, Sargan (1983),
Rotnitzky et al. (2000) and Kruiniger (2013) developed asymptotic theory for
IV estimators, MLEs and Quasi MLEs, respectively. A common finding is that
the estimator of the parameter that is only second-order locally identified
converges at a quartic root rate, i.e., at rate $T^{-1/4}$ and has a
non-normal asymptotic distribution, while the estimators of the parameters
that are first-order locally identified converge at the usual square root
rate, i.e., at rate $T^{-1/2}$ and have asymptotic distributions that are
mixtures of normal distributions. Furthermore, the limiting distribution of $%
T^{1/2}(\widehat{\phi }_{p}-\phi _{0,p})^{2}$ is a mixture of a half-normal
distribution and $0$.

Dovonon and Renault (2009) give a formal definition of second-order local
identification in the context of GMM\ estimation. Dovonon and Hall (2018),
henceforth DH, present an asymptotic theory for GMM estimators under
second-order identification. The limiting distribution they give for the
estimator of the second-order locally identified parameter, i.e., $\phi _{p}$%
, holds if a certain condition is satisfied. However, as we show in this
paper, their condition is only satisfied when $\phi $ is overidentified.
Furthermore, we show that the limiting distribution of $\widehat{\phi }_{p}$
depends on whether $\phi $ is exactly identified or overidentified. In
particular, the limiting distributions of the sign of $T^{1/4}(\widehat{\phi 
}_{p}-\phi _{0,p})$ for the cases of exact and overidentification,
respectively, are different and are obtained by using expansions of the GMM
objective function of different orders. In fact, in the case of exact
identification the limiting distribution of the sign of $T^{1/4}(\widehat{%
\phi }_{p}-\phi _{0,p})$ may not even exist. The reason for these
differences is that in the case of exact identification some terms in the
expansion vanish. On the other hand, we find that the formula for the
limiting distribution of the GMM\ estimator of the vector with the other
elements of $\phi $, viz. $\widehat{\phi }_{1}$, is the same for both cases.

When the parameters are overidentified, the GMM estimator $\widehat{\phi }$
depends on a weight matrix. We show that the asymptotically optimal weight
matrix for $\widehat{\phi }_{1}$ is different from the asymptotically
optimal weight matrix for $\widehat{\phi }_{p}$, and that only the latter
weight matrix is the same as the one that is used when all the parameters
are first-order identified.

Kruiniger (2018a) derived the limiting distributions of two Modified MLEs
for a panel ARX(1) model with homoskedastic errors when the autoregressive
parameter equals one by viewing them as GMM\ estimators. In the unit root
case the parameter vector is only second-order locally identified by the
objective functions of the Modified MLEs due to the nonlinear terms in the
modified score vector. Alvarez and Arellano (2021) found that in the same
case (i.e., the case of a unit root and homoskedastic errors) the
autoregressive parameter of the panel AR(1) model is only second-order
locally identified by certain nonlinear moment conditions due to Ahn and
Schmidt (1995). DH showed that a set of moment conditions that are related
to a conditionally heteroskedastic factor model for asset returns has a rank
deficient Jacobian matrix and that the vector of parameters in these moment
conditions is second-order locally identified. Sargan (1983) discussed IV
and FIML estimation of dynamic simultaneous equation models that are linear
in the variables and nonlinear in the parameters and where the parameter
vector is only second-order locally identified. Finally, Rotnitzky et al.
(2000) give additional examples of models where the parameter vector is only
second-order locally identified.

The paper is organized as follows. Section 2 briefly reviews GMM estimation
under first-order local identification. Section 3 defines second-order
identification. Section 4 presents the limiting distribution theory for GMM
estimators under second-order local identification and discusses its
implications for Indirect Inference (II) estimation under two scenarios with
second-order local identification where the target function is a GMM\
estimator of the auxiliary parameter vector. Section 5 offers some
concluding remarks. The appendix contains the proofs.

\section{GMM under first-order identification$\protect\vspace{-0.09in}$}

In this section we briefly review the basic GMM framework based on
first-order asymptotics, paying special attention to the role of first-order
local identification. We first define the GMM estimator and then discuss
some first-order asymptotic theory for this estimator. To this end, we
introduce the following notation. The model involves the random vector $X$
which is assumed strictly stationary with distribution $P(\phi _{0})$ which
is indexed by the parameter vector $\phi \in \Phi \subset \mathcal{%
\mathbb{R}
}^{p}$. $\phi _{0}$\ is the true value of $\phi .$

GMM is a semi-parametric method in the sense that its implementation does
not require complete knowledge of $P($\textperiodcentered $)$ but only
population moment conditions implied by this distribution. In view of this,
we suppose that the model implies:%
\begin{equation}
E[g(X,\phi _{0})]=0,  \label{f1}
\end{equation}%
where $g($\textperiodcentered $)$ is a $q\times 1$ vector of continuous
functions. The GMM estimator of $\phi _{0}$ based on\ (1) is defined as:%
\begin{equation}
\widehat{\phi }=\underset{\phi \in \Phi }{argmin}Q_{T}(\phi ),
\end{equation}%
where%
\begin{equation*}
Q_{T}(\phi )=m_{T}^{\prime }(\phi )W_{T}m_{T}(\phi )\text{ with }m_{T}(\phi
)=T^{-1}\tsum\limits_{t=1}^{T}g(x_{t},\phi ),
\end{equation*}%
$W_{T}$ is a positive definite matrix, and $\{x_{t}\}_{t=1}^{T}$ represents
the sample observations on $X$.

We will assume that $q\geq p$ and that $m_{T}(\phi )$ satisfies

\begin{assumption}
(i) $m_{T}(\phi )=O_{p}(1)$ for all $\phi \in \Phi $; (ii) $%
T^{1/2}m_{T}(\phi _{0})\overset{d}{\rightarrow }N(0,V_{m})$, where $V_{m}$
is a positive definite matrix of finite constants.
\end{assumption}

To consider the first-order asymptotic properties of GMM estimators, we
introduce a number of high level assumptions.

\begin{assumption}
(i) $W_{T}\overset{p}{\rightarrow }W$, a positive definite matrix of
constants; (ii) $\Phi $ is a compact set; (iii) $Q_{T}(\phi )\overset{p}{%
\rightarrow }Q(\phi )=m(\phi )^{\prime }Wm(\phi )$ uniformly in $\phi $;
(iv) $Q(\phi )$ is continuous on $\Phi $; (v) $Q(\phi _{0})<Q(\phi )$ $%
\forall \phi \neq \phi _{0},$ $\phi \in \Phi $.
\end{assumption}

Assumption 2(v) serves as a global identification condition. These
conditions are sufficient to establish consistency, see, for example, Newey
and McFadden (1994).

\begin{proposition}
If Assumption 2 holds, then $\widehat{\phi }\overset{p}{\rightarrow }\phi
_{0}$.
\end{proposition}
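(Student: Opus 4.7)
The plan is to apply the textbook extremum-estimator consistency argument (e.g., Theorem 2.1 of Newey and McFadden, 1994), which the high-level conditions in Assumption 2 are tailored to support. The three ingredients it needs are (i) a well-separated minimum of the limit objective, (ii) compactness of $\Phi$ together with continuity of $Q$, and (iii) uniform convergence of $Q_T$ to $Q$ on $\Phi$; these are supplied directly by Assumption 2(v), Assumption 2(ii)--(iv), and Assumption 2(iii), respectively.

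First I would translate Assumption 2(v) into a quantitative separation. For arbitrary $\epsilon>0$, set $N_{\epsilon}=\{\phi\in\Phi:\|\phi-\phi_{0}\|<\epsilon\}$. By Assumption 2(ii) the complement $\Phi\setminus N_{\epsilon}$ is compact, and by Assumption 2(iv) $Q$ is continuous on it, so $Q$ attains its infimum on this set at some $\phi^{\ast}\neq\phi_{0}$; together with 2(v) this yields $\delta\equiv Q(\phi^{\ast})-Q(\phi_{0})>0$. Next, I would invoke the uniform convergence in 2(iii): with probability approaching one, $\sup_{\phi\in\Phi}|Q_{T}(\phi)-Q(\phi)|<\delta/3$. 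On this event, the defining inequality $Q_{T}(\widehat{\phi})\leq Q_{T}(\phi_{0})$ combined with two applications of the uniform bound (one bounding $Q_{T}(\phi_{0})$ above by $Q(\phi_{0})+\delta/3$, the other bounding $Q_{T}(\widehat{\phi})$ below by $Q(\widehat{\phi})-\delta/3$) gives $Q(\widehat{\phi})\leq Q(\phi_{0})+2\delta/3<Q(\phi^{\ast})$, which forces $\widehat{\phi}\in N_{\epsilon}$. Since $\epsilon$ is arbitrary, this is exactly $\widehat{\phi}\overset{p}{\rightarrow}\phi_{0}$.

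There is no genuine obstacle here: all of the serious probabilistic work (a uniform law of large numbers for $m_{T}$, consistency of $W_{T}$, integrability of the moment function, and the global identification of $\phi_{0}$) has been encoded at a high level in Assumption 2, so the proof reduces to the deterministic bookkeeping above. The only point requiring a small amount of care is that Assumption 1 plays no role in establishing consistency; it will only be needed later for the limiting-distribution theory, a distinction worth emphasizing since it foreshadows how the second-order identification analysis in the sequel departs from this standard setup precisely at the stage where Assumption 1 interacts with an expansion of $Q_{T}$.
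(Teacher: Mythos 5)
Your proof is correct and is exactly the standard extremum-estimator consistency argument that the paper itself invokes by citing Newey and McFadden (1994) without writing out the details: a well-separated minimum from Assumption 2(ii), (iv), (v), combined with uniform convergence from 2(iii) and the defining inequality $Q_{T}(\widehat{\phi })\leq Q_{T}(\phi _{0})$. Your closing observation that Assumption 1 is not needed for consistency is also consistent with the paper, which states the proposition under Assumption 2 alone.
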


Let $M_{T}(\tilde{\phi})=\partial m_{T}(\phi )/\partial \phi ^{\prime
}|_{\phi =\tilde{\phi}}$ and let $N_{\phi ,\epsilon }$ be an $\epsilon $%
-neighbourhood of $\phi _{0}$, that is,\linebreak $N_{\phi ,\epsilon
}=\{\phi :\parallel \phi -\phi _{0}\parallel <\epsilon \}$. We can derive
the first-order asymptotic distribution of $\widehat{\phi }$ after adding
the following assumption, cf. Newey and McFadden (1994).

\begin{assumption}
(i) $\phi _{0}$ is an interior point of $\Phi $; (ii) $m_{T}(\phi )$ is
continuously differentiable on $N_{\phi ,\epsilon }$; (iii) $M_{T}(\phi )%
\overset{p}{\rightarrow }M(\phi )$ uniformly on $N_{\phi ,\epsilon }$; (iv) $%
M(\phi )$ is continuous at $\phi _{0}$; (v) $M(\phi _{0})$ has rank $p$.
\end{assumption}

Assumption 3(v) is the condition for first-order local identification. It is
sufficient but not necessary for local identification of $\phi _{0}$ on $%
N_{\phi ,\epsilon }$, but it is necessary for the development of the
standard first-order asymptotic theory.

\begin{proposition}
If Assumptions 1--3 hold, then $T^{1/2}(\widehat{\phi }_{MD}-\phi _{0})%
\overset{d}{\rightarrow }N(0,V_{\phi })$, where

$V_{\phi }=[M(\phi _{0})^{\prime }WM(\phi _{0})]^{-1}M(\phi _{0})^{\prime
}WV_{m}WM(\phi _{0})[M(\phi _{0})^{\prime }WM(\phi _{0})]^{-1}$.
\end{proposition}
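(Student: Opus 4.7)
The plan is to derive the asymptotic distribution of $\widehat{\phi}$ via the standard GMM argument: consistency + first-order conditions + mean value expansion + CLT/Slutsky. Since Proposition 1 already gives $\widehat{\phi} \overset{p}{\rightarrow} \phi_0$, and Assumption 3(i) places $\phi_0$ in the interior of $\Phi$, with probability approaching one $\widehat{\phi}$ lies in $N_{\phi,\epsilon}$ and hence is an interior minimizer of $Q_T$. By Assumption 3(ii), the first-order condition $M_T(\widehat{\phi})'W_T m_T(\widehat{\phi})=0$ holds w.p.a.1.

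Next I would perform a mean value expansion of $m_T(\widehat{\phi})$ around $\phi_0$, component by component, writing
\begin{equation*}
m_T(\widehat{\phi}) = m_T(\phi_0) + \bar{M}_T(\widehat{\phi}-\phi_0),
\end{equation*}
where the $j$-th row of $\bar{M}_T$ is $\partial g_j/\partial\phi'$ evaluated at some $\bar{\phi}^{(j)}$ on the segment between $\widehat{\phi}$ and $\phi_0$. Each $\bar{\phi}^{(j)}\overset{p}{\rightarrow}\phi_0$ by consistency, so by the uniform convergence in Assumption 3(iii) and the continuity of $M(\cdot)$ at $\phi_0$ in Assumption 3(iv), we obtain $\bar{M}_T \overset{p}{\rightarrow} M(\phi_0)$. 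Substituting the expansion into the FOC yields
\begin{equation*}
M_T(\widehat{\phi})'W_T m_T(\phi_0) + M_T(\widehat{\phi})'W_T \bar{M}_T(\widehat{\phi}-\phi_0)=0.
\end{equation*}

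I would then argue that $M_T(\widehat{\phi})'W_T\bar{M}_T \overset{p}{\rightarrow} M(\phi_0)'WM(\phi_0)$ using Assumption 2(i) for $W_T$ together with the uniform convergence of $M_T$ and the continuity of $M$ at $\phi_0$. Assumption 3(v) guarantees that $M(\phi_0)'WM(\phi_0)$ is nonsingular, so this matrix is invertible w.p.a.1. Solving for the estimator and scaling by $T^{1/2}$ gives
\begin{equation*}
T^{1/2}(\widehat{\phi}-\phi_0) = -\bigl(M_T(\widehat{\phi})'W_T \bar{M}_T\bigr)^{-1} M_T(\widehat{\phi})'W_T \, T^{1/2}m_T(\phi_0).
\end{equation*}
Applying Assumption 1(ii) ($T^{1/2}m_T(\phi_0)\overset{d}{\rightarrow} N(0,V_m)$) together with Slutsky's theorem delivers a normal limit whose variance, after straightforward multiplication, equals $V_\phi$ as stated.

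The only real subtlety here is the interior/FOC step: one must verify that w.p.a.1 $\widehat{\phi}$ is an interior point so that differentiability of $m_T$ on $N_{\phi,\epsilon}$ can be invoked. This follows by combining Proposition 1 with Assumption 3(i) and the definition of $N_{\phi,\epsilon}$. The remaining steps are routine Slutsky-style arguments; the only point requiring mild care is that the mean value theorem is applied row by row because $m_T$ is vector-valued, but this does not affect the probability limit of $\bar{M}_T$.
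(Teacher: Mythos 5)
Your argument is correct and is precisely the standard Newey--McFadden mean-value-expansion proof that the paper itself defers to (it offers no proof of Proposition 2, simply citing Newey and McFadden, 1994). All the assumptions are invoked where needed --- interiority plus consistency for the first-order condition, row-by-row mean value expansion with uniform convergence and continuity for $\bar{M}_T \overset{p}{\rightarrow} M(\phi_0)$, Assumption 3(v) for invertibility, and Assumption 1(ii) with Slutsky for the normal limit --- so there is nothing to add.
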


Global identification is crucial for consistency; global and first-order
local identification are needed for the preceding asymptotic distribution
theory.

Given Assumption 2(i), the global identification condition for GMM can be
equivalently stated as $E[g(X,\phi )]=0$ has a unique solution at $\phi
=\phi _{0}$. The first-order local identification condition can also be
stated as $E[\partial g(X,\phi )/\partial \phi ^{\prime }|_{\phi =\phi
_{0}}] $ has full column rank.

\section{Second-order local identification}

For our analysis of GMM, we adopt the definition of second-order local
identification originally introduced by Dovonon and Renault (2009). To
present this definition, we introduce the following notations. Let $m(\phi
)=E[g(X,\phi )]$ and%
\begin{equation*}
M_{k}^{(2)}(\phi _{0})=E\left[ \frac{\partial ^{2}g_{k}(X,\phi )}{\partial
\phi \partial \phi ^{\prime }}|_{\phi =\phi _{0}}\right] ,\text{ }%
k=1,2,\ldots ,q,
\end{equation*}%
where $g_{k}(X,\phi )$ is the $k$th element of $g(X,\phi )$ and $g($%
\textperiodcentered $)$ is defined in (\ref{f1}). Second-order local
identification is defined as follows.

\begin{definition}
The moment condition $m(\phi )=0$ locally identifies $\phi _{0}\in \Phi $ up
to the second order if:

(a) $m(\phi _{0})=0.$

(b) For all $u$ in the range of $M(\phi _{0})^{\prime }$ and all $v$ in the
nullspace of $M(\phi _{0})$, we have:$\medskip $

$\left( M(\phi _{0})u+\left( v^{\prime }M_{k}^{(2)}(\phi _{0})v\right)
_{1\leq k\leq q}=0\right) \Rightarrow (u=v=0).$
\end{definition}

The latter condition is derived using a second-order expansion of $m(\phi )$
around $m(\phi _{0})$ and can be motivated as follows. For any non-zero $%
\phi -\phi _{0}$ with $\phi \in N_{\phi ,\epsilon },$ we have $\phi -\phi
_{0}=c_{1}u+c_{2}v$ where $c_{1},$ $c_{2}$ are constants such that $%
c_{1}\neq 0$ and/or $c_{2}\neq 0$. For those directions for which $c_{1}$ is
non-zero, the first-order term is non-zero and dominates, and for those
directions in which $c_{1}=0$, the second-order term is non-zero. Thus,
without requiring the expected Jacobian matrix $M(\phi _{0})$ to have full
rank, conditions (a) and (b) in Definition 1 guarantee local identification
in the sense that there is no sequence of points $\{\phi _{n}\}$ different
from $\phi _{0}$ but converging to $\phi _{0}$ such that $m(\phi _{n})=0$
for all $n$. The difference between first-order local identification and
second-order local identification (with $M(\phi _{0})$ rank deficient) is
how sharply $m(\phi )$ moves away from $0$ in the neighbourhood of $\phi
_{0} $.

\noindent \textbf{Example}\textit{.} Consider the following panel AR(1)
model with individual effects:%
\begin{eqnarray}
y_{i,t} &=&\rho y_{i,t-1}+w_{i,t},  \label{mdl} \\
w_{i,t} &=&\eta _{i}+\varepsilon _{i,t},\text{ where }\eta _{i}=(1-\rho )\mu
_{i},  \notag
\end{eqnarray}%
for $i=1,...,N$ and $t=1,2,3.$ The number of individuals, $N,$ may be large.
Note that\ when $\rho =1,$ then $\eta _{i}=0$. Adding the term $%
x_{i,t}^{\prime }\breve{\beta}$ to (\ref{mdl}) with $x_{i,t}$ exogenous and $%
\breve{\beta}=\beta (1-\rho )$ does not affect the essence of the analysis
below except that $p$ and $q$ increase by $\dim (\breve{\beta}$)\ and $%
\breve{\beta}$ is first-order locally identified in all cases. We make the
following assumption.\smallskip

\textbf{Assumption A} $\left\{ \eta
_{i},y_{i,0},y_{i,1},y_{i,2},y_{i,3}\right\} _{i=1}^{N}$ \textit{is a random
sample from a joint distribution with finite fourth-order moments that
satisfies} $E(\varepsilon _{i,t}|\eta _{i},y_{i,0},\ldots ,y_{i,t-1})=0$ 
\textit{for} $t=1,2,3.$\medskip

Let the unconditional variances of the errors be denoted as $E(\varepsilon
_{i,t}^{2})=\sigma _{t}^{2}$ for $t=1,2,3.$ We are interested in GMM
estimation of $\rho $. Assumption A implies the following three linear
moment conditions for the model in (\ref{mdl}), cf. Arellano and Bond (1991):%
\begin{equation}
m_{AB,s,t}(\rho ):=E[y_{i,t-s}(\Delta y_{i,t}-\rho \Delta y_{i,t-1})]=0\text{
for }s=2,t\text{ and }t=2,3,  \label{bega}
\end{equation}%
where $\Delta y_{i,t}=y_{i,t}-y_{i,t-1}$. Assumption A also implies one
nonlinear moment condition\ for the model in (\ref{mdl}), cf. Ahn and
Schmidt (1995):%
\begin{equation}
m_{AS,3}(\rho ):=E[(y_{i,3}-\rho y_{i,2})(\Delta y_{i,2}-\rho \Delta
y_{i,1})]=0.  \label{asm}
\end{equation}%
\pagebreak

If $\rho \neq 1,$ then $\rho $ is both globally and first-order locally
identified by each of the above four moment conditions, while if $\rho =1,$
then the first three moment conditions do not help to identify $\rho $ at
all, because they are linear in $\rho $ and $dm_{AB,s,t}(\rho )/d\rho =0$
for $s=2,t$ and $t=2,3.$ When $\rho =1$ and $\sigma _{1}^{2}\neq \sigma
_{2}^{2}$, then $\rho $ is still first-order locally identified by $%
m_{AS,3}(\rho )=0,$ because $dm_{AS,3}(\rho )/d\rho =-\sigma _{2}^{2}+\sigma
_{1}^{2}\neq 0,$ but \textit{technically speaking} $\rho $ is no longer
globally identified because $m_{AS,3}(\rho )=0$ now has two solutions, i.e., 
$\rho =1$ and $\rho =\sigma _{2}^{2}/\sigma _{1}^{2}$. However, the fact
that $m_{AS,3}(\rho )=0$ has multiple solutions only in this case (and not
when $\rho \neq 1$) means that \textit{in practice} $\rho $ is globally
identified in this case (because the occurance of two roots implies that $%
\rho =1$) and that the GMM\ estimator that exploits $m_{AS,3}(\rho )=0$ is
also consistent in this case, see Kruiniger (2013) for details. If $\rho =1$
and $\sigma _{1}^{2}=\sigma _{2}^{2}$, then $\rho $ is second-order rather
than first-order locally identified by $m_{AS,3}(\rho )=0,$ because $%
dm_{AS,3}(\rho )/d\rho =0$ and $d^{2}m_{AS,3}(\rho )/d\rho ^{2}=2\sigma
_{1}^{2}\neq 0$, and $\rho $ is also globally identified because the two
solutions of $m_{AS,3}(\rho )=0$ are now both equal to $1$, that is, $1$ is
a double root of $m_{AS,3}(\rho )=0$, cf. Alvarez and Arellano (2021).
Finally, as we have just seen, when $\rho =1,$ then $\rho $ is only
identified by $m_{AS,3}(\rho )=0$ and not by $m_{AB,s,t}(\rho )=0$ for $%
s=2,t $ and $t=2,3,$ so in this case we really have $q=1$ rather than $q=4,$
that is, $\rho $ is exactly identified rather than overidentified because $%
q=p=1.$

\section{The limiting distribution of the GMM estimator}

In this section we consider the moment condition model (1) and study the
asymptotic behaviour of the GMM estimator when $\phi _{0}$ is second-order
locally identified because the moment condition exhibits the properties in
Definition 1 but the standard local identification condition (Assumption
3(v)) fails.

\subsection{Main results}

We study the asymptotic behaviour of the GMM estimator by restricting
ourselves to the case of a rank deficiency of one, i.e., the rank of $M(\phi
_{0})$ is equal to $p-1$, since this case is relatively easy to analyse
compared to the general case. W.l.o.g. we consider the case where the rank
deficiency of $M(\phi _{0})$ is due to its last column being a null vector.%
\footnote{%
As mentioned by Sargan (1983), any model with an expected Jacobian that has
a rank deficiency of one can be brought into this configuration by
reparametrizing the model as needed.} To\ this end, we partition $\phi $
into ($\phi _{1:p-1}^{\prime },\phi _{p})^{\prime }$ where $\phi _{1:p-1}$
is the vector consisting of the first $p-1$ elements of $\phi $ and $\phi
_{p}$ is the $p-$th element of $\phi $. For ease of presentation below, we
shorten the subscript and write $\phi _{1}$ for $\phi _{1:p-1}$. Thus $\phi
_{0}=(\phi _{0,1}^{\prime },\phi _{0,p})^{\prime }$ where $\phi _{0,1}$ is a 
$(p-1)\times 1$ vector containing the true value of $\phi _{1:p-1}$ and $%
\phi _{0,p}$ is the true value of $\phi _{p}$. If $M(\phi _{0})$ has rank $%
p-1$ with $\frac{\partial m}{\partial \phi _{p}}(\phi _{0})=0$, then
second-order local identification is equivalent to:%
\begin{equation*}
Rank\left( \frac{\partial m}{\partial \phi _{1}^{\prime }}(\phi _{0})\text{%
\quad }\frac{\partial ^{2}m}{\partial \phi _{p}^{2}}(\phi _{0})\right) =p
\end{equation*}%
This is the setting studied by Sargan (1983) for the instrumental variables
estimator for a nonlinear in parameters model. We now present the regularity
conditions under which we derive the asymptotic distribution of the GMM
estimator. Define $D=\frac{\partial m}{\partial \phi _{1}^{\prime }}(\phi
_{0})$ and $G=$\linebreak $\frac{\partial ^{2}m}{\partial \phi _{p}^{2}}%
(\phi _{0})$. The next assumption states formally the identification pattern
described above.

\begin{assumption}
(i) $m(\phi )=0$ $\Leftrightarrow $ $\phi =\phi _{0}$; (ii) $\frac{\partial m%
}{\partial \phi _{p}}(\phi _{0})=0$; (iii) $Rank(D$ $G)=p$.
\end{assumption}

As mentioned by Sargan (1983), any model with an expected Jacobian that has
a rank deficiency of one can be brought into this configuration by rotating
the parameter space, see DH for details. We also require the following
regularity conditions to hold.

\begin{assumption}
(i) $m_{T}(\phi )$ has partial derivatives up to order 3 if $q>p$ and up to
order 5 if $q=p$ in a neighbourhood $N_{\phi ,\epsilon }$ of $\phi _{0}$ and
the derivatives of $m_{T}(\phi )$ converge in probability uniformly on $%
N_{\phi ,\epsilon }$ to those of $m(\phi )$.$\smallskip \smallskip
\smallskip \smallskip $

(ii) $\sqrt{T}\left( 
\begin{array}{c}
m_{T}(\phi _{0}) \\ 
\frac{\partial m_{T}}{\partial \phi _{p}}(\phi _{0})%
\end{array}%
\right) \overset{d}{\rightarrow }\left( 
\begin{array}{c}
\mathcal{Z}_{0} \\ 
\mathcal{Z}_{1}%
\end{array}%
\right) .\smallskip \smallskip \smallskip \smallskip $

(iii) $W_{T}-W=o_{p}(T^{-1/4}),$ $\frac{\partial m_{T}}{\partial \phi
_{1}^{\prime }}(\phi _{0})-D=O_{p}(T^{-1/2}),$ $\frac{\partial ^{2}m_{T}}{%
\partial \phi _{p}^{2}}(\phi _{0})-G=O_{p}(T^{-1/2}),$ $\frac{\partial
^{2}m_{T}}{\partial \phi _{1}^{\prime }\partial \phi _{p}}(\phi
_{0})-G_{1p}=o_{p}(1)$ and $\frac{\partial ^{3}m_{T}}{\partial \phi _{p}^{3}}%
(\phi _{0})-L=o_{p}(1),$ and if $q=p,$ $\frac{\partial ^{3}m_{T}}{\partial
\phi _{1}^{\prime }\partial \phi _{p}^{2}}(\phi _{0})-G_{1pp}=o_{p}(1),$ $%
\frac{\partial ^{4}m_{T}}{\partial \phi _{1}^{\prime }\partial \phi _{p}^{3}}%
(\phi _{0})-G_{1ppp}=o_{p}(1),$ $\frac{\partial ^{4}m_{T}}{\partial \phi
_{p}^{4}}(\phi _{0})-F=o_{p}(1)$ and $\frac{\partial ^{2}m_{k,T}}{\partial
\phi _{1}\partial \phi _{1}^{\prime }}(\phi _{0})-K_{k}=o_{p}(1)$ for $%
k=1,2,\ldots ,q,$ with $G_{1p}=\frac{\partial ^{2}m}{\partial \phi
_{1}^{\prime }\partial \phi _{p}}(\phi _{0}),$ $L=\frac{\partial ^{3}m}{%
\partial \phi _{p}^{3}}(\phi _{0}),$ $G_{1pp}=\frac{\partial ^{3}m}{\partial
\phi _{1}^{\prime }\partial \phi _{p}^{2}}(\phi _{0}),$ $G_{1ppp}=\frac{%
\partial ^{4}m}{\partial \phi _{1}^{\prime }\partial \phi _{p}^{3}}(\phi
_{0}),$ $F=\frac{\partial ^{4}m}{\partial \phi _{p}^{4}}(\phi _{0})$ and $%
K_{k}=\frac{\partial ^{2}m_{k}}{\partial \phi _{1}\partial \phi _{1}^{\prime
}}(\phi _{0})$ for $k=1,2,\ldots ,q,$ where $m_{k,T}(\phi )$\ $(m_{k}(\phi
)) $ is the kth element of $m_{T}(\phi )$\ $($of $m(\phi )).\vspace{-0.05in}$
\end{assumption}

These conditions are stronger than those imposed in the standard first-order
asymptotic analysis. The derivation of the asymptotic distribution of the
GMM estimator requires an expansion of $Q_{T}(\phi )$ involving derivatives
of $m_{T}(\phi )$ up to the third order when $q>p$ and up to the fifth order
when $q=p,$ and the uniform convergence guaranteed by Assumption 5(i) is
useful to control the remainder of our expansions. The orders of our
expansions of $Q_{T}(\phi )$ for the cases $q>p$ and $q=p$ are different
because in the case of exact identification some terms in the expansion
vanish. Assumption 5(ii) states that $\sqrt{T}(m_{T}(\phi _{0})^{\prime
},\partial m_{T}(\phi _{0})^{\prime }/\partial \phi _{p})^{\prime }$
converges in distribution. Under Assumption 4 and additional mild conditions
on $g(X,\phi _{0})$ and $\frac{\partial g}{\partial \phi _{p}}(X,\phi _{0})$%
, the central limit theorem guarantees that $(\mathcal{Z}_{0},\mathcal{Z}%
_{1})^{\prime }\sim N(0,v)$, with $v=lim_{T\rightarrow \infty }Var[\sqrt{T}%
(m_{T}(\phi _{0})^{\prime },\partial m_{T}(\phi _{0})^{\prime }/\partial
\phi _{p})^{\prime }]$. Assumption 5(iii) imposes the asymptotic order of
magnitude on the differences between some sample dependent quantities and
their probability limits. These orders of magnitude are enough to make these
differences negligible in the expansions. Assumption 5(iii) is not
particularly restrictive since most of the orders of magnitude imposed are
guaranteed by the central limit theorem.

To facilitate the presentation of our main result in this section, we
introduce the following definitions. Let $M_{d}$ be the matrix of the
orthogonal projection on the orthogonal complement of $W^{1/2}D:\vspace{%
-0.14in}$%
\begin{equation*}
M_{d}=I_{q}-W^{1/2}D(D^{\prime }WD)^{-1}D^{\prime }W^{1/2},\vspace{-0.14in}
\end{equation*}%
where $I_{q}$ is the identity matrix of size $q$, let $P_{g}$ be the matrix
of the orthogonal projection\ on $M_{d}W^{1/2}G:\vspace{-0.06in}$%
\begin{equation*}
P_{g}=M_{d}W^{1/2}G(G^{\prime }W^{1/2}M_{d}W^{1/2}G)^{-1}G^{\prime
}W^{1/2}M_{d},\vspace{-0.1in}
\end{equation*}%
and let $M_{dg}$ be the matrix of the orthogonal projection on the
orthogonal complement of $(W^{1/2}D$\quad $W^{1/2}G):\vspace{-0.06in}$%
\begin{equation*}
M_{dg}=M_{d}-P_{g}.\vspace{-0.1in}
\end{equation*}%
Let$\vspace{-0.14in}$%
\begin{eqnarray}
\mathcal{R}_{1} &=&(\mathcal{Z}_{0}^{\prime }W^{1/2}P_{g}W^{1/2}\mathcal{Z}%
_{0}G^{\prime }-G^{\prime }W^{1/2}P_{g}W^{1/2}\mathcal{Z}_{0}\mathcal{Z}%
_{0}^{\prime })W^{1/2}M_{d}W^{1/2}\times \vspace{-0.14in}  \notag \\
&&(\frac{1}{3}L+G_{1p}HG)/\sigma _{G}+\mathcal{Z}_{0}^{\prime
}W^{1/2}M_{dg}W^{1/2}(\mathcal{Z}_{1}+G_{1p}H\mathcal{Z}_{0}),  \label{f7}
\end{eqnarray}%
with $\sigma _{G}=G^{\prime }W^{1/2}M_{d}W^{1/2}G$ and $H=-(D^{\prime
}WD)^{-1}D^{\prime }W$. In addition, let $V=$ $-2\mathcal{Z}\mathbf{1}(%
\mathcal{Z}<0)/\sigma _{G},$ where $\mathcal{Z}=G^{\prime
}W^{1/2}M_{d}W^{1/2}\mathcal{Z}_{0}$ and $\mathbf{1}($\textperiodcentered $)$
is the usual indicator function.

The following lemma, which is based on Theorem 1 in DH, and theorem give the
asymptotic properties of the GMM estimator $\widehat{\phi }$ under
Assumptions 2, 4 and 5.\pagebreak

\begin{lemma}[Dovonon and Hall (2018)]
Under Assumptions 2, 4 and 5, we have:

$(a)$ $\widehat{\phi }_{1}-\phi _{0,1}=O_{p}(T^{-1/2})$ and $\widehat{\phi }%
_{p}-\phi _{0,p}=O_{p}(T^{-1/4});$

$(b)$ if in addition $\phi _{0}\in interior(\Phi )$, then$\smallskip
\smallskip $

$\qquad \left( 
\begin{array}{c}
\sqrt{T}(\widehat{\phi }_{1}-\phi _{0,1}) \\ 
\sqrt{T}(\widehat{\phi }_{p}-\phi _{0,p})^{2}%
\end{array}%
\right) \overset{d}{\rightarrow }\left( 
\begin{array}{c}
H\mathcal{Z}_{0}+HGV/2 \\ 
V%
\end{array}%
\right) .$
\end{lemma}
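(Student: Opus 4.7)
The plan is to derive Lemma~1 via a reparametrized Taylor expansion of $Q_T(\phi)$ around $\phi_0$, profiling out the first-order-identified block $\phi_1$, and a constrained minimization of the resulting concentrated limit criterion in the non-negative variable $v^2$, where $v=\phi_p-\phi_{0,p}$.

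First I would invoke Proposition~1 (granted by Assumption~2) to obtain $\widehat\phi\overset{p}{\to}\phi_0$, so that eventually $\widehat\phi$ lies in the interior of $N_{\phi,\epsilon}$. Writing $u=\widehat\phi_1-\phi_{0,1}$, and using Assumption~4(ii) together with Assumption~5(i),(iii), a Taylor expansion of $m_T(\widehat\phi)$ around $\phi_0$ gives
$$m_T(\widehat\phi)=m_T(\phi_0)+Du+\frac{\partial m_T}{\partial\phi_p}(\phi_0)\,v+\tfrac12 G v^2+R_T(u,v),$$
where $R_T$ bundles the Hessian deviations ($D_T-D$, $G_T-G$ being $O_p(T^{-1/2})$) and the cross and higher-order terms in $u,v$. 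Once the conjectured rates $u=O_p(T^{-1/2})$ and $v=O_p(T^{-1/4})$ are verified, these remainders, together with $(W_T-W)=o_p(T^{-1/4})$, contribute $o_p(T^{-1})$ to $Q_T$ and are negligible for the asymptotics of $T Q_T$.

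The central technical step is establishing those rates (part~(a)). I would follow a two-stage bootstrap in the style of Sargan~(1983) and Rotnitzky et al.~(2000): a preliminary polynomial rate comes from the $Q(\phi)-Q(\phi_0)$ separation (Assumption~2(v)) combined with local uniform convergence, and is then refined by substituting the scaled local parameters $\alpha=\sqrt T\,u$ and $\beta=\sqrt T\,v^2\ge 0$ into $T Q_T$; the limit quadratic form coerces $(\alpha,\beta)$ via Assumption~4(iii), which makes both $D'WD$ and $\sigma_G=G'W^{1/2}M_dW^{1/2}G$ strictly positive, so a standard compactification and stochastic equicontinuity argument pins down $(\alpha,\beta)=O_p(1)$. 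I expect this rate derivation to be the main obstacle, since it is the only place where second-order identification enters non-trivially.

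Given the rates, the rescaled criterion converges uniformly on compacts (by Assumption~5(ii), noting that $\sqrt T\,\partial m_T/\partial\phi_p(\phi_0)\cdot v$ is $O_p(T^{-1/4})$) to
$$\Psi(\alpha,\beta)=[\mathbb{Z}_0+D\alpha+\tfrac12 G\beta]'\,W\,[\mathbb{Z}_0+D\alpha+\tfrac12 G\beta].$$
Profiling out $\alpha\in\mathbb{R}^{p-1}$ yields the closed form $\widehat\alpha(\beta)=H[\mathbb{Z}_0+\tfrac12 G\beta]$, and substituting back produces the concentrated limit
$$\Psi^c(\beta)=[\mathbb{Z}_0+\tfrac12 G\beta]'\,W^{1/2}M_d W^{1/2}\,[\mathbb{Z}_0+\tfrac12 G\beta],$$
a convex scalar quadratic in $\beta$ with leading coefficient $\tfrac14\sigma_G>0$. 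Its minimum over $\beta\ge 0$ is $\beta^*=-2\mathbb{Z}/\sigma_G$ on $\{\mathbb{Z}<0\}$ and $\beta^*=0$ on $\{\mathbb{Z}\ge 0\}$, i.e., $\beta^*=V$. An argmin continuous mapping theorem (applicable since $\Psi^c$ is strictly convex a.s.) then delivers $\sqrt T v^2\overset{d}{\to}V$, and plugging in gives $\sqrt T u=\widehat\alpha(\beta^*)+o_p(1)\overset{d}{\to}H\mathbb{Z}_0+HGV/2$ jointly, which is part~(b).
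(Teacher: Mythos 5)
Your argument for part (b) is essentially the paper's own proof: the paper likewise Taylor-expands $m_T(\widehat{\phi })$ to obtain $m_T(\phi _0)+D(\widehat{\phi }_1-\phi _{0,1})+\frac{1}{2}G(\widehat{\phi }_p-\phi _{0,p})^2+o_p(T^{-1/2})$, eliminates $\widehat{\phi }_1$ via the first-order condition in the $\phi _1$ direction (exactly your profiling step, giving $\widehat{\alpha }(\beta )=H[\mathbb{Z}_0+\frac{1}{2}G\beta ]$), arrives at the same concentrated quadratic $\underline{K}_T(\phi _{0,p})$ in $(\widehat{\phi }_p-\phi _{0,p})^2$ with leading coefficient $\frac{1}{4}\sigma _G$, and minimizes it over the non-negative half-line to get $V=-2\mathbb{Z}\mathbf{1}(\mathbb{Z}<0)/\sigma _G$, the only cosmetic difference being that the paper closes the argument with a Prohorov tightness-plus-subsequence device where you invoke an argmin continuous mapping theorem. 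The one place you depart from the paper is part (a), which the paper simply defers to the proof of Theorem 1(a) in Dovonon and Hall (2018); your two-stage rate bootstrap is left as a plan there, and note that as sketched it is slightly circular (substituting $\alpha =\sqrt{T}u$, $\beta =\sqrt{T}v^2$ and appealing to coercivity of the limit quadratic already presupposes that the remainder of the expansion is $o_p(T^{-1})$, which itself requires a preliminary rate obtained from $Q_T(\widehat{\phi })\leq Q_T(\phi _0)=O_p(T^{-1})$ together with the second-order identification lower bound on $Q$), so if you intend part (a) to be self-contained that step needs to be carried out in full rather than gestured at.
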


\begin{theorem}
Under Assumptions 2, 4 and 5, and if $\phi _{0}\in interior(\Phi )$, we have:

$(a)$ if in addition $q>p$, then$\smallskip \smallskip $

$\qquad T^{1/4}(\widehat{\phi }_{p}-\phi _{0,p})\overset{d}{\rightarrow }%
(-1)^{B_{1}}\sqrt{V},\smallskip \smallskip $

with $B_{1}=\mathbf{1}(\mathcal{R}_{1}\geq 0);$

$(b)$ if in addition $q=p$ and $\Pr (\mathcal{R}_{2}=0|\mathcal{Z}<0)=0$,
where $\mathcal{R}_{2}$ is defined in the proof below equation $(\ref{h4})$,
then$\smallskip \smallskip $

$\qquad T^{1/4}(\widehat{\phi }_{p}-\phi _{0,p})\overset{d}{\rightarrow }%
(-1)^{B_{2}}\sqrt{V},\smallskip \smallskip $

with $B_{2}=\mathbf{1}(\mathcal{R}_{2}\geq 0).$
\end{theorem}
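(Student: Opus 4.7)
The plan is to refine the proof of Lemma~1. That lemma already establishes $\sqrt{T}(\widehat{\phi}_p-\phi_{0,p})^2\to V$, so only the sign of $T^{1/4}(\widehat{\phi}_p-\phi_{0,p})$ remains to be pinned down; identifying it requires expanding $Q_T$ one order deeper in case~(a) and two orders deeper in case~(b). Throughout I would reparametrize by $\phi_1=\phi_{0,1}+u/\sqrt{T}$ and $\phi_p=\phi_{0,p}+sT^{-1/4}$, Taylor-expand $m_T$ around $\phi_0$ using Assumption~5, and profile out $\phi_1$ via its first-order condition $[\partial m_T/\partial\phi_1']'W_T m_T=0$. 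Iterating that FOC yields
\begin{equation*}
\widehat{u}(s)=H\sqrt{T}\,m_T(\phi_0)+\tfrac{1}{2}HG\,s^2+O_p(T^{-1/4}),\qquad H=-(D'WD)^{-1}D'W,
\end{equation*}
and, using the identity $W^{1/2}(I_q+DH)=M_dW^{1/2}$, substitution produces
\begin{equation*}
\sqrt{T}\,W^{1/2}m_T\bigl(\widehat{\phi}_1(s),\phi_{0,p}+sT^{-1/4}\bigr)=M_dW^{1/2}\bigl[\mathbb{Z}_0+\tfrac{1}{2}Gs^2\bigr]+T^{-1/4}R(s)+o_p(T^{-1/4}),
\end{equation*}
with $R(s)=M_dW^{1/2}[\mathbb{Z}_1 s+\tfrac{1}{6}Ls^3+G_{1p}\widehat{u}(s)\,s]$. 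Minimising the leading $T^{-1}$ piece of $Q_T^{\ast}(s):=Q_T(\widehat{\phi}_1(s),\phi_{0,p}+sT^{-1/4})$ over $s^2$ reproduces $s^2=V$ and the conclusion of Lemma~1.

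For part~(a), with $q>p$, I would push $Q_T^{\ast}(s)$ to order $T^{-5/4}$, at which level it picks up a cross term between the leading $M_dW^{1/2}[\mathbb{Z}_0+\tfrac{1}{2}Gs^2]$ and $R(s)$. On the event $\{\mathbb{Z}<0\}$ where $V>0$, evaluating at $s^2=V$ gives $(\mathbb{Z}_0+\tfrac{1}{2}GV)'W^{1/2}M_dW^{1/2}G=\mathbb{Z}+\tfrac{\sigma_G}{2}V=0$, so the $P_g$-component of $M_dW^{1/2}(\mathbb{Z}_0+\tfrac{1}{2}GV)$ cancels and only its $M_{dg}$-component survives. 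Collecting the odd-in-$s$ contributions and using $\tfrac{1}{2}GV=-\mathbb{Z}G/\sigma_G$, the linear-in-$s$ coefficient at $s=\pm\sqrt{V}$ reduces to a positive scalar multiple of $\mathbb{R}_1$. The argmin then selects the sign that makes this correction negative, yielding $T^{1/4}(\widehat{\phi}_p-\phi_{0,p})\to(-1)^{B_1}\sqrt{V}$ with $B_1=\mathbf{1}(\mathbb{R}_1\ge 0)$.

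For part~(b), $q=p$ forces two cancellations. First, $M_d$ has rank $q-(p-1)=1$ and coincides with $P_g$, so $M_{dg}\equiv 0$, killing the second term of $\mathbb{R}_1$. Second, combined with the identity $P_g W^{1/2}(\mathbb{Z}_0+\tfrac{1}{2}GV)=0$ just obtained, this gives $M_dW^{1/2}(\mathbb{Z}_0+\tfrac{1}{2}GV)=0$ — the leading $T^{-1/2}$ residual of $W^{1/2}m_T(\widehat{\phi})$ itself vanishes at $s^2=V$, so the whole $T^{-5/4}$-asymmetric contribution to $Q_T^{\ast}$ collapses (equivalently, exact identification allows $m_T(\widehat{\phi})=0$ at leading order). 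To recover the sign I would push $Q_T^{\ast}$ to order $T^{-3/2}$, which is precisely why Assumption~5(iii) brings in the fifth-order derivatives $K_k$, $G_{1pp}$, $G_{1ppp}$, $F$. Writing $s=s_0+T^{-1/4}s_1+T^{-1/2}s_2$ with $s_0^2=V$, matching orders determines $s_1,s_2$ as functions of $s_0$, and the surviving equation at the deepest order reduces to a single linear-in-$s_0$ condition whose coefficient is a positive multiple of $\mathbb{R}_2$. Under $\Pr(\mathbb{R}_2=0)=0$ this selects $s_0=(-1)^{B_2}\sqrt{V}$ unambiguously.

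The main obstacle is the bookkeeping in case~(b): checking that \emph{every} candidate $T^{-5/4}$ asymmetry really cancels when $q=p$ — not only the $M_{dg}$-piece but also the $P_g$-piece — and then isolating from the $T^{-3/2}$ expansion the exact combination of fifth-order population derivatives and $(\mathbb{Z}_0,\mathbb{Z}_1)$ that plays the role of $\mathbb{R}_2$. Uniform control of the Taylor remainders on the $O_p(T^{-1/4})$-neighbourhood of $\phi_0$ in which the argmin lies is provided by Assumption~5(i); Assumption~5(iii) supplies the stochastic orders needed to treat the sample–population gaps as negligible at each stage. Once those are in place, the sign logic — selecting $\mathrm{sign}(s_0)$ so that the surviving odd-in-$s_0$ correction is negative — is the same in (a) and (b), only with $\mathbb{R}_1$ replaced by $\mathbb{R}_2$ and the expansion pushed from $T^{-5/4}$ down to $T^{-3/2}$.
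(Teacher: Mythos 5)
Your overall strategy is the same as the paper's: concentrate out $\phi_1$ via its first-order condition (the paper's equation (\ref{g3}) is exactly your $\widehat{u}(s)$), establish that the even-in-$s$ leading term only pins down $s^2=V$, and then read off the sign of $s$ from the first non-vanishing odd-in-$s$ contribution to the expansion of the objective. Your part (a) is essentially the paper's argument: the $O_p(T^{-5/4})$ terms are $(\widehat{\phi}_p-\phi_{0,p})\times 2R_{1T}$ with $TR_{1T}\overset{d}{\rightarrow}\mathbb{R}_1$, the $P_g$-component cancels on $\{\mathbb{Z}<0\}$ leaving the $M_{dg}$-component, and the argmin picks the sign opposite to $R_{1T}$. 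Your observation that $M_{dg}\equiv 0$ when $q=p$ (so the $T^{-5/4}$ asymmetry collapses entirely) is also exactly the paper's Lemma 2(a1).

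There is, however, a genuine error in part (b): you claim the sign is recovered by pushing the expansion to order $T^{-3/2}$. It is not. Each factor in the expansion of $m_T(\widehat{\phi})$ carries an order $T^{-1/2}$ (even in $s$), $T^{-3/4}$ (odd in $s$), $T^{-1}$ (even), $T^{-5/4}$ (odd), and so on; hence in the quadratic form $m_T'W_Tm_T$ every $O_p(T^{-6/4})$ term is a product of two even-order or two odd-order pieces and therefore involves only \emph{even} powers of $(\widehat{\phi}_p-\phi_{0,p})$. The paper states this explicitly at the start of its proof of (b): the $O_p(T^{-6/4})$ terms are symmetric under $s\mapsto -s$ and carry no sign information whatsoever. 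The first surviving odd contribution when $q=p$ sits at order $T^{-7/4}$, and that is where $\underline{R}_{2,T}(\phi_0)=(\widehat{\phi}_p-\phi_{0,p})\times 2R_{2,T}$ with $T^{6/4}R_{2,T}\overset{d}{\rightarrow}\mathbb{R}_2$ lives — which is also why Assumption 5(i) demands derivatives of $m_T$ up to order five when $q=p$ (the fifth derivative $\partial^5 m_T/\partial\phi_p^5$ enters the $T^{-7/4}$ terms, even though it ultimately cancels from the final formula for $\mathbb{R}_2$; the quantities $F$, $G_{1pp}$, $G_{1ppp}$, $K_k$ you cite are fourth-, third- and second-order derivatives, not fifth). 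Your two-scale substitution $s=s_0+T^{-1/4}s_1+T^{-1/2}s_2$ does not rescue this: the choice between the two basins near $\pm\sqrt{V}$ is governed by the difference of the objective across them, which depends only on the odd-in-$s$ part, so stopping at $T^{-3/2}$ leaves the sign of $s_0$ undetermined and the limit object you would extract there cannot be $\mathbb{R}_2$.
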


Parts (a) and (b) of Lemma 1 are the same as parts (a) and (b) of Theorem 1
in DH. In the Appendix we provide an alternative, self-contained proof for
part (b) of Lemma 1. There we also provide a proof for our Theorem 1.$%
\vspace{-0.14in}$

\subsection{Discussion$\protect\vspace{-0.06in}$}

Our theorem is different from part (c) of Theorem 1 of DH. The latter only
states that $T^{1/4}(\widehat{\phi }_{p}-\phi _{0,p})$ converges in
distribution to the limiting distribution given in part (a) of our Theorem 1
under the condition that $\Pr (\mathcal{R}_{1}=0)=0.$ DH claim in their
Remark 1 that this condition \textquotedblleft is not expected to be
restrictive in general ...\textquotedblright , although they also add the
following caveat: \textquotedblleft However, when $q=p=1$ (one moment
restriction with one non first-order locally identified parameter), we can
see that $\mathcal{R}_{1}=0$.\textquotedblright\ However, as we show in the
proof of Lemma 2 in the Appendix, the condition $\Pr (\mathcal{R}_{1}=0)=0$
is actually only satisfied when $\phi $ is overidentified, i.e., when $q>p;$
when $\phi $ is exactly identified, i.e., when $q=p,$ then $\Pr (\mathcal{R}%
_{1}=0|\mathcal{Z}<0)=1$ and hence $\Pr (\mathcal{R}_{1}=0)>0.$ In other
words, the theory of DH only provides the limiting distribution of $T^{1/4}(%
\widehat{\phi }_{p}-\phi _{0,p})$ for the case where $\phi $ is
overidentified.

In part (b) of Theorem 1 we provide the limiting distribution of $T^{1/4}(%
\widehat{\phi }_{p}-\phi _{0,p})$ for the case where $\phi $ is exactly
identified and $\Pr (\mathcal{R}_{2}=0|\mathcal{Z}<0)=0$. The difference
between the limiting distributions of $T^{1/4}(\widehat{\phi }_{p}-\phi
_{0,p})$ given in parts (a) and (b) is related to the difference between the
distributions of the Bernoulli r.v.'s $B_{1}$ and $B_{2}$ that determine the
sign of $T^{1/4}(\widehat{\phi }_{p}-\phi _{0,p})$ when $\mathcal{Z}<0.$

Let $S_{1,T}=O_{p}(1)$ and $S_{2,T}=O_{p}(1)$ be such that $\sqrt{T}%
m_{T}(\phi _{0})+D\sqrt{T}(\widehat{\phi }_{1}-\phi _{0,1})+\frac{1}{2}G%
\sqrt{T}(\widehat{\phi }_{p}-\phi _{0,p})^{2}=(\widehat{\phi }_{p}-\phi
_{0,p})S_{1,T}+T^{-1/2}S_{2,T}+o_{p}(T^{-1/2})$, $S_{1,T}\overset{d}{%
\rightarrow }S_{1}$ and $S_{2,T}\overset{d}{\rightarrow }S_{2}$. Then the
condition $\Pr (\mathcal{R}_{2}=0|\mathcal{Z}<0)=0$ is violated if and only
if $F=0,$ $G_{1pp}=0,$ $K_{k}=0$ for $k=1,2,\ldots ,q,$ and $S_{2}=0$, cf.
Lemma 2(b). In particular, when $q=p=1,$ then $\Pr (\mathcal{R}_{2}=0|%
\mathcal{Z}<0)>0$ if only if $F=0$ and $S_{2}=0$. These conditions for $\Pr (%
\mathcal{R}_{2}=0|\mathcal{Z}<0)>0,$ including the condition $S_{2}=0$,
require that $m_{T}(\phi )$ is at most locally linear in $\phi _{1}$ and at
most locally cubic in $\phi _{p}$ at $\phi _{0}$.\ If $q=p$ and $\Pr (%
\mathcal{R}_{2}=0|\mathcal{Z}<0)>0,$ then it is not possible to describe the
limiting distribution of the sign of $T^{1/4}(\widehat{\phi }_{p}-\phi
_{0,p})$.

Part (a) of Lemma 1 gives the rates of convergence of $\widehat{\phi }_{1}$
and $\widehat{\phi }_{p}$. Because $\phi _{1}$ is first-order identified and 
$\phi _{p}$ is second-order identified, $\widehat{\phi }_{1}-\phi _{0,1}$
converges at the usual rate $T^{-1/2}$ while $\widehat{\phi }_{p}-\phi
_{0,p} $ converges at the slower rate $T^{-1/4}.$

Part (b) of Lemma 1 gives the limiting distribution of $(\sqrt{T}(\widehat{%
\phi }_{1}-\phi _{0,1}),$ $\sqrt{T}(\widehat{\phi }_{p}-\phi _{0,p})^{2})$.
This result is obtained by minimizing the sum of the leading $O_{p}(T^{-1})$
terms of an expansion of $m_{T}^{\prime }(\widehat{\phi })W_{T}m_{T}(%
\widehat{\phi })$ around $\phi _{0}$ which are collected into $\underline{K}%
_{T}(\phi _{0})$ as given by (\ref{g4}) in the Appendix. As $\underline{K}%
_{T}(\phi _{0})$ is a quadratic function of $(\widehat{\phi }_{1}-\phi
_{0,1})$ and $(\widehat{\phi }_{p}-\phi _{0,p})^{2}$ only, it only allows
one to obtain the limiting distribution of $T^{1/4}|\widehat{\phi }_{p}-\phi
_{0,p}|.$ To obtain the limiting distribution of the sign of $T^{1/4}(%
\widehat{\phi }_{p}-\phi _{0,p})$ one needs to employ a higher order
expansion of $m_{T}^{\prime }(\widehat{\phi })W_{T}m_{T}(\widehat{\phi })$
which includes an odd power of $(\widehat{\phi }_{p}-\phi _{0,p}).$

When $q>p,$ the limiting distribution of the sign of $T^{1/4}(\widehat{\phi }%
_{p}-\phi _{0,p})$ can be obtained from the $O_{p}(T^{-5/4})$ terms in the
expansion of $m_{T}^{\prime }(\widehat{\phi })W_{T}m_{T}(\widehat{\phi }).$
In that case we have:$\vspace{-0.13in}$%
\begin{equation*}
m_{T}^{\prime }(\widehat{\phi })W_{T}m_{T}(\widehat{\phi })=\underline{K}%
_{T}(\phi _{0,p})+(\widehat{\phi }_{p}-\phi _{0,p})\times
2R_{1T}+o_{p}(T^{-5/4}),\vspace*{-0.11in}
\end{equation*}%
where $\underline{K}_{T}(\phi _{0,p})$ and $R_{1T}$ are quadratic functions
of $(\widehat{\phi }_{p}-\phi _{0,p})^{2}.$ We show in the Appendix that $%
TR_{1T}\overset{d}{\rightarrow }\mathcal{R}_{1}$ and that $\Pr (\mathcal{R}%
_{1}=0)=0$ if $q>p.$ When $Z_{T}\equiv G^{\prime
}W^{1/2}M_{d}W^{1/2}m_{T}(\phi _{0})<0,$ $q>p$ and $T$ is large, the minimum
of $m_{T}^{\prime }(\phi )W_{T}m_{T}(\phi )$ is reached when $(\phi
_{p}-\phi _{0,p})$ has the opposite sign to $R_{1T}$. This suggests that
when $q>p,$ then the limiting distribution of the sign of $T^{1/4}(\widehat{%
\phi }_{p}-\phi _{0,p})$ can be described by $(-1)^{B_{1}}$ with $B_{1}=%
\mathbf{1}(\mathcal{R}_{1}\geq 0).$

When $q=p,$ then $\Pr (\mathcal{R}_{1}=0|\mathcal{Z}<0\mathbf{)}=1$. In
fact, when $q=p,$ then $\mathcal{R}_{1}=0$, see the end of the proof of
Lemma 2(a1). However, if $q=p$ and $\Pr (\mathcal{R}_{2}=0|\mathcal{Z}<0)=0,$
then the limiting distribution of the sign of $T^{1/4}(\widehat{\phi }%
_{p}-\phi _{0,p})$ can be obtained from the $O_{p}(T^{-7/4})$ terms in the
expansion of $m_{T}^{\prime }(\widehat{\phi })W_{T}m_{T}(\widehat{\phi }).$
In that case we have:$\vspace{-0.12in}$%
\begin{equation*}
m_{T}^{\prime }(\widehat{\phi })W_{T}m_{T}(\widehat{\phi })=\underline{K}%
_{T}(\phi _{0,p})+(\widehat{\phi }_{p}-\phi _{0,p})\times
2R_{1T}+O_{p}(T^{-6/4})+(\widehat{\phi }_{p}-\phi _{0,p})\times
2R_{2T}+o_{p}(T^{-7/4}),\vspace*{-0.1in}
\end{equation*}%
where the $O_{p}(T^{-6/4})$ term and $R_{2T}$ are cubic functions of $(%
\widehat{\phi }_{p}-\phi _{0,p})^{2}.$ We show in the Appendix that $%
T^{6/4}R_{2T}\overset{d}{\rightarrow }\mathcal{R}_{2}$ if $q=p.$ When $%
Z_{T}<0$, $q=p,$ $\Pr (\mathcal{R}_{2}=0|\mathcal{Z}<0)=0$ and $T$ is large,
then the minimum of $m_{T}^{\prime }(\phi )W_{T}m_{T}(\phi )$ is reached
when $(\phi _{p}-\phi _{0,p})$ has the opposite sign to $R_{2T}$. This
suggests that if $q=p$ and $\Pr (\mathcal{R}_{2}=0|\mathcal{Z}<0)=0$, then
the limiting distribution of the sign of $T^{1/4}(\widehat{\phi }_{p}-\phi
_{0,p})$ can be described by $(-1)^{B_{2}}$ with $B_{2}=\mathbf{1}(\mathcal{R%
}_{2}\geq 0).$ If $q=p$ and $\Pr (\mathcal{R}_{2}=0|\mathcal{Z}<0)>0,$ then
the sign of $T^{1/4}(\widehat{\phi }_{p}-\phi _{0,p})$ does not have a
proper limiting distribution, whereas $\sqrt{T}(\widehat{\phi }_{p}-\phi
_{0,p})^{2}$ has one, which is given in Lemma 1(b).

The limiting distributions in Theorem 1 and part (b) of Lemma 1 are
non-standard but easy to simulate. Approximations to these limiting
distributions can be obtained by drawing randomly copies of $(\mathcal{Z}%
_{0}^{\prime },$ $\mathcal{Z}_{1}^{\prime })^{\prime }$ from $N(0,\widehat{v}%
)$, where $\widehat{v}$ is a consistent estimator of $v$, and using
consistent estimators of $W,$ $D,$ $G,$ $L,$ $G_{1p},$ $G_{1pp},$ $G_{1ppp},$
$F$ and $K_{k}$ for $k=1,2,\ldots ,q$ as required.

Generally, the limiting distribution of\ $T^{1/4}(\widehat{\phi }_{p}-\phi
_{0,p})$ given in Theorem 1 is asymmetric around $0$ and $\widehat{\phi }%
_{p} $ has an asymptotic bias, unless $\Pr (B_{1}=1|\mathcal{Z}_{0})=\frac{1%
}{2}$ when $q>p$ or $\Pr (B_{2}=1|\mathcal{Z}_{0})=\frac{1}{2}$ when $q=p.$
Furthermore, $\widehat{\phi }_{1}$ has an asymptotic bias.\vspace{-0.12in}

\subsubsection{Optimal weight matrices for $\protect\widehat{\protect\phi }%
_{1}$ and $\protect\widehat{\protect\phi }_{p}$}

When $q>p,$ the limiting Mean Squared Errors (MSEs) of $\widehat{\phi }_{1}$
and $\widehat{\phi }_{p}$ depend on the choice of the limiting weight matrix 
$W$. Recall that $V$ depends on $M_{d}$ and that $M_{d}$ depends on $W$.
Hence $V$ depends on $W.$ Let $M_{d}(W)=M_{d},$ $V(W)=V,$ $\Psi _{1}(W)=E((%
\mathcal{Z}_{0}+GV(W)/2)(\mathcal{Z}_{0}+GV(W)/2)^{\prime })$ and $\Psi
_{p}=E(\mathcal{Z}_{0}\mathcal{Z}_{0}^{\prime }).$ Furthermore, let $\hat{%
\Psi}_{p}$ be a consistent estimate of $\Psi _{p}.$ At the end of the
appendix we show that the optimal weight matrix for $\widehat{\phi }_{p}$
that minimizes its limiting MSE is $W_{p,opt}=\Psi _{p}^{-1}$ and that the
limiting MSE of $T^{1/4}(\widehat{\phi }_{p,opt}-\phi _{0,p})$ is given by $%
(G^{\prime }\Psi _{p}^{-1/2}M_{d}(\Psi _{p}^{-1})\Psi _{p}^{-1/2}G)^{-1/2}%
\sqrt{2/\pi }.$ If $\phi _{1}$ and $\phi _{p}$ are estimated jointly, then
the weight matrix that minimizes the limiting MSE of $\widehat{\phi }_{1}$,
viz. $W_{opt},$ is the solution of $W_{opt}=(\Psi _{1}(W_{opt}))^{-1}$.
However, as $W_{opt}\neq c\Psi _{p}^{-1}$ for any $c\in 
\mathbb{R}
_{+}$, $W_{opt}$ is not an optimal weight matrix for $\widehat{\phi }_{p}.$
Moreover, $W_{opt}=(\Psi _{1}(W_{opt}))^{-1}$ has no closed form solution.
Therefore, it is better to estimate $\phi _{1}$ and $\phi _{p}$ separately.
In that case $W_{1,opt}=(\Psi _{1}(\Psi _{p}^{-1}))^{-1}$ is an optimal
weight matrix for $\widehat{\phi }_{1}$. Let $\widehat{W}_{1,opt}$ denote a
consistent estimate of $W_{1,opt}=(\Psi _{1}(\Psi _{p}^{-1}))^{-1}$. Then we
obtain $\widehat{\phi }_{1,opt}$ by minimising $m_{T}^{\prime }(\phi _{1};%
\widehat{\phi }_{p,opt})\widehat{W}_{1,opt}m_{T}(\phi _{1};\widehat{\phi }%
_{p,opt}).$ $\widehat{W}_{1,opt}$ can be obtained by making use of the
equality $E((\mathcal{Z}_{0}+GV(W)/2)(\mathcal{Z}_{0}+GV(W)/2)^{\prime })=%
\frac{1}{2}E((\mathcal{Z}_{0}-G\mathcal{Z}/\sigma _{G})(\mathcal{Z}_{0}-G%
\mathcal{Z}/\sigma _{G})^{\prime })+\frac{1}{2}E(\mathcal{Z}_{0}\mathcal{Z}%
_{0}^{\prime })$ with $W=\Psi _{p}^{-1}$ and taking the inverse of a
consistent estimate of its right-hand side.

\subsection{Examples with exact identification}

Kruiniger (2018a) derived the limiting distributions of two Modified MLEs
for the panel ARX(1) model with homoskedastic errors when the autoregressive
parameter equals one by viewing them as GMM\ estimators. In the unit root
case the autoregressive parameter is only second-order locally identified by
the objective functions of both Modified MLEs due to the nonlinear terms in
the modified score vector. Furthermore, the parameter vector is obviously
exactly identified and the condition $\Pr (\mathcal{R}_{2}=0)=0$ holds
because the condition $F+3!G_{1pp}HG+\frac{4!}{2}\tilde{\lambda}_{3}\neq 0$
is satisfied. It is therefore unsurprising that the limiting distributions
obtained by Kruiniger (2018a) for both Modified MLEs of the autoregressive
parameter are in agreement with Theorem 1(b) above.

We now return to the example given in section 3. Recall that when $\rho =1$
and $\sigma _{1}^{2}=\sigma _{2}^{2}$, then $\rho $ is second-order globally
identified by the four moment conditions mentioned in that example. However,
in this example $q=p=1,$ $F=0$ and $S_{2}=0$ because there is only one
nonlinear moment condition that identifies $\rho $, namely $m_{AS,3}(\rho
)=0 $, which is quadratic in $\rho $. Thus in this case the limiting
distribution of $T^{1/4}(\widehat{\rho }-1)$ cannot be obtained from Theorem
1(b). This is not due to a shortcoming of the theory; there are at least two
explanations for this result. Firstly, note that when $\rho =1,$ $%
m_{AS,3}(r)=E[(\varepsilon _{i,3}+(1-r)y_{i,2})(\Delta \varepsilon
_{i,2}+(1-r)\varepsilon _{i,1})].$ Hence, when $N$ is large, the GMM
estimator for $\rho $ is approximately equal to the solution of $%
N^{-1/2}\tsum_{i=1}^{N}[(\widehat{\rho }-1)^{2}y_{i,2}\varepsilon _{i,1}-(%
\widehat{\rho }-1)(\varepsilon _{i,1}\varepsilon _{i,3}+y_{i,2}\Delta
\varepsilon _{i,2})+\varepsilon _{i,3}\Delta \varepsilon _{i,2}]=0.$
However, the linear term $-(\widehat{\rho }-1)N^{-1/2}\tsum_{i=1}^{N}(%
\varepsilon _{i,1}\varepsilon _{i,3}+y_{i,2}\Delta \varepsilon
_{i,2})=o_{p}(1)$ because $N^{1/4}(\widehat{\rho }-1)=O_{p}(1)$ and $%
N^{-1/2}\tsum_{i=1}^{N}(\varepsilon _{i,1}\varepsilon _{i,3}+y_{i,2}\Delta
\varepsilon _{i,2})=O_{p}(1).$ Thus when $N$ tends to infinity, the sample
counterpart of $m_{AS,3}(\rho )=0$ determines the distribution of $N^{1/2}(%
\widehat{\rho }-1)^{2},$ i.e., $N^{1/2}(\widehat{\rho }-1)^{2}\overset{d}{%
\rightarrow }-\widetilde{\mathcal{Z}}_{0}\mathbf{1}(\widetilde{\mathcal{Z}}%
_{0}<0)/\sigma ^{2}$ with $N^{-1/2}\tsum_{i=1}^{N}(\varepsilon _{i,3}\Delta
\varepsilon _{i,2})\overset{d}{\rightarrow }\widetilde{\mathcal{Z}}_{0},$
but cannot determine the distribution of the sign of $N^{1/4}(\widehat{\rho }%
-1)$. Secondly, if $\rho =1$ and $\sigma _{1}^{2}=\sigma _{2}^{2}$, then
both roots of $m_{AS,3}(r)=0$ are equal to $1$, and hence \emph{both} roots
of the sample counterpart of $m_{AS,3}(r)=0$, viz. $\widehat{\rho }_{1}$ and 
$\widehat{\rho }_{2}$, are consistent estimators of $\rho $. The limiting
distributions of $N^{1/2}(\widehat{\rho }_{1}-1)^{2}$ and $N^{1/2}(\widehat{%
\rho }_{2}-1)^{2}$ are the same and given by Lemma 1(b). However,
asymp-\linebreak totically $N^{1/4}(\widehat{\rho }_{1}-1)$ and $N^{1/4}(%
\widehat{\rho }_{2}-1)$ have opposite signs. Hence a theory that can
determine the limiting distribution of the sign of $N^{1/4}(\widehat{\rho }%
-1)$ cannot exist in this case.

\subsection{GMM-based inference under second-order identification}

Dovonon, Hall and Kleibergen (2020) studied and compared the local power
properties of various test-statistics for conducting inference in moment
conditions models that locally identify the parameters only to second order.
The tests considered include tests for $H_{0}:\phi _{0}=a,$ where $a$ is a
known vector, such as the conventional Wald and LM tests, the Generalized
Anderson-Rubin (GAR) test (Anderson and Rubin, 1949; Staiger and Stock,
1997; Stock and Wright, 2000), the KLM test (Kleibergen 2002; 2005) and the
GMM extension of Moreira's (2003) Conditional LR test, also known as the
GMM-M test (Kleibergen, 2005), and tests for $H_{0}:m(\phi _{0})=0,$ such as
the identification-robust\linebreak J test of Kleibergen (2005) and the GAR
test. Under the null hypothesis the conventional LM and Wald test-statistics
have non-standard limiting distributions, although the LM test-statistic
converges to a $\chi ^{2}$ r.v. in a special case; the distribution of the
Wald test-statistic depends on $T^{1/4}(\widehat{\phi }_{p}-\phi _{0,p})$
only through $T^{1/2}(\widehat{\phi }_{p}-\phi _{0,p})^{2}$, which has a
limiting distribution that is a mixture of a half-normal distribution and $0$%
. All the other test-statistics are robust to weak and second-order local
identification and have the same limiting distribution under the null
hypothesis as they would have under first-order local identification. Apart
from the Wald test, all the tests can also be used when the Jacobian is rank
deficient by more than one. Dovonon, Hall and Kleibergen (2020) found that
in a particular panel AR(1) model, the Wald test of the unit root hypothesis
has better power than the GAR, KLM, LM and GMM-M tests.

Kruiniger (2018b) discusses a Quasi LM\ test for $H_{0}:\phi _{0}=a,$ when $%
\phi _{0,p}$ is possibly only second-order locally identified. Specifically,
Kruiniger's (2018b) Quasi LM\ test-statistic generalizes the LM
test-statistic $W_{n}^{(2)}(\theta )$ in Bottai (2003), who studied the
asymptotic behaviour of several tests and confidence regions in identifiable
one-dimensional parametric models with a smooth likelihood function and
Fisher information equal to zero at some point in the parameter space, in
two ways, namely by allowing for several parameters in the model and by
relaxing Bottai's ML setup to a Quasi ML setup. Under $H_{0}$ both LM\
test-statistics have a $\chi ^{2}$-distribution, also when one of the
parameters that appears in the null hypothesis is only second-order locally
identified. In the latter case, the score that corresponds to that parameter
and appears in the LM\ test-statistic under first-order local identification
will be replaced by its first-derivative. Kruiniger (2018b) shows that his
Quasi LM\ test and the confidence region that is based on inverting his
test-statistic have correct asymptotic size in a uniform sense.

Finally, Lee and Liao (2018) pointed out that when (a part of) $\phi _{0}$
is only second-order locally identified by the original set of moment
conditions, then Jacobian-based moment conditions can be used to obtain GMM
estimators and overidentification-test-statistics with standard asymptotic
properties. They then noted that the asymptotic normal distributions of such
GMM estimators can be used to conduct standard inference on $\phi _{0}$.
However, their tests and confidence intervals are only valid when (a part
of) $\phi _{0}$ is only second-order locally identified by the original set
of moment conditions and hence they obviously do not have correct asymptotic
size in a uniform sense.

\subsection{Monte Carlo results}

Using simulations, Kruiniger (2018a) and DH studied the finite sample
properties of specific GMM estimators under second-order identification in
the cases of exact and overidentification, respectively. Both papers found
that the GMM\ estimator of $\phi _{0,p}$\thinspace is biased. These findings
are related to the asymmetry of the limiting distributions of $T^{1/4}(%
\widehat{\phi }_{p}-\phi _{0,p})$ in these cases, which are given in Theorem
1 above.\footnote{%
Recall that in the case of exact identification Theorem 1 of DH does not
provide a limiting distribution of $T^{1/4}(\widehat{\phi }_{p}-\phi _{0,p})$
because $\Pr (%
\mathbb{R}
_{1}=0)>0$ in this case.} DH also studied the coverage rates of two
symmetric confidence intervals (CI's) for $\phi _{0,p}$ in a model with $p=1$
that are based on a GMM\ estimator for $\phi _{0,p}$ that exploits $q>p$
moment conditions and use analytic and simulated quantiles, respectively, of
the limiting distribution given in Lemma 1(b) above. Their Monte Carlo
evidence seems to suggest that the coverage rates of these CI's converge to
the nominal level except when $\phi _{0,p}$ is close to the value at which $%
\phi _{0,p}$ is second-order identified. However, DH did not claim that
these CI's have correct asymptotic size in a uniform sense. \pagebreak

\subsection{Indirect Inference}

Dovonon and Hall (2018) also considered the limiting distribution of an
Indirect Inference (II) estimator under second-order local identification.
Specifically, DH focused on an II estimator for the parameter vector $\theta
_{0}\in \Omega \subset \mathcal{%
\mathbb{R}
}^{p}$ which is defined by the following set-up: the auxiliary model
consists of a set of $q$ population moment conditions indexed by a vector of
auxiliary parameters $h\in \mathcal{H}\subset \mathcal{%
\mathbb{R}
}^{l}$ and the target function for the II estimation is a GMM estimator of
the auxiliary parameter vector. Within this framework, there are two types
of identification conditions: one set involving the binding function, and
the other involving the auxiliary parameters. The standard first-order
asymptotic theory is premised on the assumption that the binding function
satisfies global and first-order local identification conditions and the
auxiliary parameters are globally and first-order locally identified within
the auxiliary model. DH presents the limiting distribution of the II
estimator under the following two scenarios: (i) the binding function
satisfies the global and first-order local identification conditions and the
auxiliary parameters are globally identified but only locally identified at
second order; (ii) the binding function satisfies the global identification
condition but only satisfies the local identification condition at second
order, and the auxiliary parameters are globally and first-order locally
identified.

Unsurprisingly, the limiting distribution theories of DH for II estimation
under these two different scenarios with second-order identification, i.e.,
their Theorems 2 and 3(b) are incomplete for similar reasons as their
limiting distribution theory for GMM\ estimation is: their limiting
distributions for scenario (i) and scenario (ii) are only valid in the case
of overidentification, that is, when $q>l$ and when $l>p,$ respectively, or
in terms of their conditions, when $\Pr (\mathcal{R}_{1}^{(a)}=0)=0$ and $%
\Pr (\mathcal{R}_{1}^{(b)}(s)=0)=0$, respectively, where $\mathcal{R}%
_{1}^{(a)}$ and $\mathcal{R}_{1}^{(b)}(s)$ are defined similarly as $%
\mathcal{R}_{1}$, see DH. Our results and derivations for GMM estimation
under second-order local identification can be used to complete both
theories and in particular can be used to straightforwardly derive the
limiting distributions of the II estimators in the case of exact
identification under either scenario. The representations of these
distributions are obtained by replacing $\mathcal{R}_{1}^{(a)}$ and $%
\mathcal{R}_{1}^{(b)}(s)$ (implicit) in Theorems 2 and 3(b) in DH by $%
\mathcal{R}_{2}^{(a)}$ and $\mathcal{R}_{2}^{(b)}(s)$, respectively, which
are defined similarly as $\mathcal{R}_{2}$ above just like $\mathcal{R}%
_{1}^{(a)}$ and $\mathcal{R}_{1}^{(b)}(s)$ are defined similarly as $%
\mathcal{R}_{1}$.\vspace{-0.12in}

\section{Concluding remarks}

The limiting distribution theory of DH (2018) for GMM\ estimators under
second-order\ local identification depends on a non-transparent condition,
namely that $\Pr (\mathcal{R}_{1}=0)=0$ where $\mathcal{R}_{1}$ is defined
in (\ref{f7}). We have shown that this condition is only satisfied when $%
\phi $ is overidentified and derived the limiting distribution of $T^{1/4}(%
\widehat{\phi }_{p}-\phi _{0,p})$ for the case where $\phi $ is exactly
identified. This distribution is different from that of $T^{1/4}(\widehat{%
\phi }_{p}-\phi _{0,p})$ given in DH\ for the case where $\phi $ is
overidentified. In particular, the limiting distributions of the sign of $%
T^{1/4}(\widehat{\phi }_{p}-\phi _{0,p})$ for the cases of exact and
overidentification, respectively, are different and are obtained by using
expansions of the GMM objective function of different orders. We have also
pointed out that the limiting distribution theories of DH for Indirect
Inference (II) estimation under two different scenarios with second-order
identification where the target function is a GMM\ estimator of the
auxiliary parameter vector, are incomplete for similar reasons and we have
discussed how they can be completed.

The asymptotic theory for GMM\ estimators that has been discussed in this
paper can be generalized in two directions: (i) one can consider cases where
the (expected) Jacobian matrix has a rank deficiency that is higher than
one, and/or (ii) local identification of an order that is higher than two.\
In the case of second-order identification where the Jacobian has a rank
deficiency of $rd$ (with $rd\in 
\mathbb{N}
\backslash \{0,1\}$), one can reparametrize the model in such a way that the
last $rd$ columns of the Jacobian are zero, and we expect that the
asymptotic theory is similar to the theory for the case of a rank deficiency
of one apart from the fact that in the current case there are now $rd$ GMM\
estimators that converge at rate $T^{-1/4}$. In the case of local
identification of order $s$ (with $s>1$), we expect that the GMM\
estimator(s) of the higher-order identified parameter(s) converge(s) at rate 
$T^{-1/(2s)}.$ Like Rotnitzky et al. (2000), one also needs to distinguish
between cases where $s$ is even and cases where $s$ is odd: when $s$ is
even, we expect that the limiting distribution(s) of the GMM\ estimator(s)
of the higher-order identified parameter(s) is/are a mixture of a spike at
the true value and a non-standard distribution, while when $s$ is odd, we
expect that their limiting distribution(s) is/are equal to the distribution
of the $s-th$ root of a normal random variable. Furthermore, when $s$ is
even, the GMM estimators of the remaining parameters converge at the usual
rate $T^{-1/2}$ and have a limiting distribution that is a mixture of two
normal distributions, whereas when $s$ is odd, they converge at the usual
rate $T^{-1/2}$ and have a normal limiting distribution.\vspace{-0.12in}%
\pagebreak

\section{Appendix. Proofs}

\textbf{Proof of Lemma 1.}

(a) For a proof for this part we refer to the proof of part (a) of Theorem 1
in DH.

(b) Here we provide an alternative to the proof of DH. Our proof is
self-contained, unlike their proof, and it is also more straightforward than
their proof.

Using $\frac{\partial m_{T}}{\partial \phi _{p}}(\phi _{0})=O_{p}(T^{-1/2})$
and $(\widehat{\phi }_{p}-\phi _{0,p})=o_{p}(1)$ (from Proposition 1), DH
show in the proof for part (a) of their Theorem 1 that%
\begin{equation}
m_{T}(\widehat{\phi })=m_{T}(\phi _{0})+\frac{\partial m_{T}}{\partial \phi
_{1}^{\prime }}(\overline{\phi }_{1},\widehat{\phi }_{p})(\widehat{\phi }%
_{1}-\phi _{0,1})+\frac{1}{2}\frac{\partial ^{2}m_{T}}{\partial \phi _{p}^{2}%
}(\phi _{0,1},\overline{\phi }_{p})(\widehat{\phi }_{p}-\phi
_{0,p})^{2}+o_{p}(T^{-1/2}).  \label{f8}
\end{equation}%
where $\overline{\phi }_{1}\in (\phi _{0,1},\widehat{\phi }_{1})$ and may
differ from row to row, and where $\overline{\phi }_{p}\in (\phi _{0,p},%
\widehat{\phi }_{p})$ and may differ from row to row.

From (a) and (\ref{f8}), we have%
\begin{equation*}
m_{T}(\widehat{\phi })=m_{T}(\phi _{0})+D(\widehat{\phi }_{1}-\phi _{0,1})+%
\frac{1}{2}G(\widehat{\phi }_{p}-\phi _{0,p})^{2}+o_{p}(T^{-1/2}).
\end{equation*}%
The first-order condition for an interior solution is given by:%
\begin{equation*}
\frac{\partial m_{T}^{\prime }}{\partial \phi }(\widehat{\phi })W_{T}m_{T}(%
\widehat{\phi })=0.
\end{equation*}%
In the direction of $\phi _{1}$, this amounts to%
\begin{equation*}
(D^{\prime }+o_{p}(1))W(\sqrt{T}m_{T}(\phi _{0})+D\sqrt{T}(\widehat{\phi }%
_{1}-\phi _{0,1})+\frac{1}{2}G\sqrt{T}(\widehat{\phi }_{p}-\phi
_{0,p})^{2}+o_{p}(1))=0.
\end{equation*}%
This gives:%
\begin{equation}
\sqrt{T}(\widehat{\phi }_{1}-\phi _{0,1})=-(D^{\prime }WD)^{-1}D^{\prime }W(%
\sqrt{T}m_{T}(\phi _{0})+\frac{1}{2}G\sqrt{T}(\widehat{\phi }_{p}-\phi
_{0,p})^{2})+o_{p}(1).  \label{g3}
\end{equation}%
A Taylor expansion of $m_{T}^{\prime }(\widehat{\phi })W_{T}m_{T}(\widehat{%
\phi })$ around $\phi _{0}$ up to second-order gives:%
\begin{equation}
Q_{T}(\widehat{\phi })=m_{T}^{\prime }(\widehat{\phi })W_{T}m_{T}(\widehat{%
\phi })=m_{T}^{\prime }(\widehat{\phi })Wm_{T}(\widehat{\phi }%
)+o_{p}(T^{-1})=\underline{K}_{T}(\phi _{0})+o_{p}(T^{-1})  \label{g2}
\end{equation}%
where%
\begin{eqnarray}
\underline{K}_{T}(\phi _{0}) &=&m_{T}^{\prime }(\phi _{0})Wm_{T}(\phi
_{0})+2m_{T}^{\prime }(\phi _{0})W(D(\widehat{\phi }_{1}-\phi _{0,1})+\frac{1%
}{2}G(\widehat{\phi }_{p}-\phi _{0,p})^{2})  \notag \\
&&+(D(\widehat{\phi }_{1}-\phi _{0,1})+\frac{1}{2}G(\widehat{\phi }_{p}-\phi
_{0,p})^{2})^{\prime }W(D(\widehat{\phi }_{1}-\phi _{0,1})+\frac{1}{2}G(%
\widehat{\phi }_{p}-\phi _{0,p})^{2}).\qquad  \label{g4}
\end{eqnarray}%
Defining $Z_{0T}=m_{T}(\phi _{0})$ and replacing $(\widehat{\phi }_{1}-\phi
_{0,1})$ in (\ref{g4}) by its expression from (\ref{g3}), the leading $%
O_{p}(T^{-1})$ term in the expansion of $m_{T}^{\prime }(\widehat{\phi }%
)W_{T}m_{T}(\widehat{\phi })$ is obtained as $\underline{K}_{T}(\phi _{0,p})$
with 
\begin{equation}
\underline{K}_{T}(\phi _{0,p})=Z_{0T}^{\prime
}W^{1/2}M_{d}W^{1/2}Z_{0T}+Z_{0T}^{\prime }W^{1/2}M_{d}W^{1/2}G(\widehat{%
\phi }_{p}-\phi _{0,p})^{2}+\frac{1}{4}G^{\prime }W^{1/2}M_{d}W^{1/2}G(%
\widehat{\phi }_{p}-\phi _{0,p})^{4}.  \label{g5}
\end{equation}%
Let $Z_{T}=Z_{0T}^{\prime }W^{1/2}M_{d}W^{1/2}G.$ If $Z_{T}<0,$ then $%
m_{T}^{\prime }(\widehat{\phi })W_{T}m_{T}(\widehat{\phi })$ is minimized at%
\begin{equation}
(\widehat{\phi }_{p}-\phi _{0,p})^{2}=-2\frac{Z_{0T}^{\prime
}W^{1/2}M_{d}W^{1/2}G}{G^{\prime }W^{1/2}M_{d}W^{1/2}G}+o_{p}(T^{-1/2}).
\label{g6}
\end{equation}%
If $Z_{T}\geq 0,$ then $m_{T}^{\prime }(\widehat{\phi })W_{T}m_{T}(\widehat{%
\phi })$ is minimized at $(\widehat{\phi }_{p}-\phi
_{0,p})^{2}=o_{p}(T^{-1/2}).$

Since $\sqrt{T}m_{T}(\phi _{0})$ and $\sqrt{T}(\widehat{\phi }_{p}-\phi
_{0,p})^{2}$ are $O_{p}(1)$, the pair is jointly $O_{p}(1)$ and by
Prohorov's theorem, any subsequence of them has a further subsequence that
jointly converges in distribution towards, say, $(\mathcal{Z}_{0},V)$. Hence,%
\begin{equation*}
Tm_{T}^{\prime }(\widehat{\phi })W_{T}m_{T}(\widehat{\phi })\overset{d}{%
\rightarrow }\underline{K}(\phi _{0})\equiv \mathcal{Z}_{0}^{\prime
}W^{1/2}M_{d}W^{1/2}\mathcal{Z}_{0}+\mathcal{Z}_{0}^{\prime
}W^{1/2}M_{d}W^{1/2}GV+\frac{1}{4}G^{\prime }W^{1/2}M_{d}W^{1/2}GV^{2}.
\end{equation*}%
Let $\mathcal{Z}=\mathcal{Z}_{0}^{\prime }W^{1/2}M_{d}W^{1/2}G.$ If $%
\mathcal{Z}<0,$ then $\underline{K}(\phi _{0})$ is minimized at 
\begin{equation*}
V=-2\frac{\mathcal{Z}_{0}^{\prime }W^{1/2}M_{d}W^{1/2}G}{G^{\prime
}W^{1/2}M_{d}W^{1/2}G}=-2\frac{\mathcal{Z}}{\sigma _{G}}
\end{equation*}%
If $\mathcal{Z}\geq 0,$ then $\underline{K}(\phi _{0})$ is minimized at $%
V=0. $

Either way, we conclude that $\sqrt{T}(\widehat{\phi }_{p}-\phi _{0,p})^{2}%
\overset{d}{\rightarrow }V=-2\mathcal{Z}\mathbf{1}(\mathcal{Z}<0)/\sigma
_{G}.$

Finally, using (\ref{g3}) we obtain that $\sqrt{T}(\widehat{\phi }_{1}-\phi
_{0,1})\overset{d}{\rightarrow }H(\mathcal{Z}_{0}+\frac{1}{2}GV),$ where $%
H=-(D^{\prime }WD)^{-1}D^{\prime }W.$\quad ${\tiny \blacksquare }$\textbf{%
\medskip }\pagebreak

\textbf{Proof of Theorem 1.}

The limiting distribution for $T^{1/4}(\widehat{\phi }_{p}-\phi _{0,p})$
that is specified in part (c) of Theorem 1 of DH (2018) is only valid in the
case of overidentification, i.e., when $q>p,$ and DH's proof of this claim
is only valid when $q>p.$ Here we provide a more straightforward and more
complete proof of this claim for the case $q>p$, i.e., of part (a) of our
Theorem 1, and also derive the limiting distribution of $T^{1/4}(\widehat{%
\phi }_{p}-\phi _{0,p})$ in the case of exact identification, i.e., when $%
q=p,$ which is given in part (b) of our Theorem 1.

Proof of (a): In our proof for part (b) of Lemma 1 we have derived the
limiting distribution of $\sqrt{T}(\widehat{\phi }_{p}-\phi _{0,p})^{2}.$ To
get the limiting distribution of $T^{1/4}(\widehat{\phi }_{p}-\phi _{0,p}),$
it remains to characterize its sign when $\mathcal{Z}<0$.

The powers of $(\widehat{\phi }_{p}-\phi _{0,p})$ in an expansion of $%
m_{T}^{\prime }(\widehat{\phi })W_{T}m_{T}(\widehat{\phi })$ around $\phi
_{0}$ up to $O_{p}(T^{-1})$ are even, cf. $\underline{K}_{T}(\phi _{0,p})$
in (\ref{g5}), and therefore we cannot characterize the sign of $T^{1/4}(%
\widehat{\phi }_{p}-\phi _{0,p})$ by using such an expansion. However,
following the approach of Rotnitzky et al. (2000) for the ML estimator, we
can do this when $q>p$ by expanding $m_{T}^{\prime }(\widehat{\phi }%
)W_{T}m_{T}(\widehat{\phi })$ up to $o_{p}(T^{-5/4})$. Specifically, the $%
O_{p}(T^{-5/4})$ terms in a Taylor expansion of $m_{T}^{\prime }(\widehat{%
\phi })W_{T}m_{T}(\widehat{\phi })$ up to $o_{p}(T^{-5/4})$ will provide the
sign of $T^{1/4}(\widehat{\phi }_{p}-\phi _{0,p})$ in case $q>p$ as we will
now show.\vspace*{-0.05in}%
\begin{equation*}
Q_{T}(\widehat{\phi })=m_{T}^{\prime }(\widehat{\phi })W_{T}m_{T}(\widehat{%
\phi })=m_{T}^{\prime }(\widehat{\phi })Wm_{T}(\widehat{\phi }%
)+o_{p}(T^{-5/4})=\underline{K}_{T}(\phi _{0,p})+\underline{R}_{1T}(\phi
_{0})+o_{p}(T^{-5/4})
\end{equation*}%
\vspace*{-0.05in}with $\underline{R}_{1T}(\phi _{0})=(\widehat{\phi }%
_{p}-\phi _{0,p})\times 2R_{1T}$ where\vspace*{-0.05in}%
\begin{eqnarray}
R_{1T} &=&(m_{T}(\phi _{0})+\frac{\partial m_{T}}{\partial \phi _{1}^{\prime
}}(\phi _{0})(\widehat{\phi }_{1}-\phi _{0,1}))^{\prime }W(\frac{\partial
m_{T}}{\partial \phi _{p}}(\phi _{0})+\frac{\partial ^{2}m_{T}}{\partial
\phi _{p}\partial \phi _{1}^{\prime }}(\phi _{0})(\widehat{\phi }_{1}-\phi
_{0,1}))+  \notag \\
&&\frac{1}{3!}(m_{T}(\phi _{0})+\frac{\partial m_{T}}{\partial \phi
_{1}^{\prime }}(\phi _{0})(\widehat{\phi }_{1}-\phi _{0,1}))^{\prime }W(%
\frac{\partial ^{3}m_{T}}{\partial \phi _{p}^{3}}(\phi _{0}))(\widehat{\phi }%
_{p}-\phi _{0,p})^{2}+  \notag \\
&&\frac{1}{2!}(\frac{\partial m_{T}}{\partial \phi _{p}}(\phi _{0})+\frac{%
\partial ^{2}m_{T}}{\partial \phi _{p}\partial \phi _{1}^{\prime }}(\phi
_{0})(\widehat{\phi }_{1}-\phi _{0,1}))^{\prime }W(\frac{\partial ^{2}m_{T}}{%
\partial \phi _{p}^{2}}(\phi _{0}))(\widehat{\phi }_{p}-\phi _{0,p})^{2}+ 
\notag \\
&&\frac{1}{3!2!}(\frac{\partial ^{3}m_{T}}{\partial \phi _{p}^{3}}(\phi
_{0}))^{\prime }W(\frac{\partial ^{2}m_{T}}{\partial \phi _{p}^{2}}(\phi
_{0}))(\widehat{\phi }_{p}-\phi _{0,p})^{4}.  \label{g7}
\end{eqnarray}%
\vspace*{-0.05in}Defining $Z_{1T}=\frac{\partial m_{T}}{\partial \phi _{p}}%
(\phi _{0})$ and replacing $(\widehat{\phi }_{1}-\phi _{0,1})$ in (\ref{g7})
by its expression from (\ref{g3}), we obtain\vspace*{-0.05in}%
\begin{eqnarray}
R_{1T} &=&(Z_{0T}+DH(Z_{0T}+\frac{1}{2}G(\widehat{\phi }_{p}-\phi
_{0,p})^{2}))^{\prime }W(Z_{1T}+G_{1p}H(Z_{0T}+\frac{1}{2}G(\widehat{\phi }%
_{p}-\phi _{0,p})^{2}))+  \notag \\
&&\frac{1}{3!}(Z_{0T}+DH(Z_{0T}+\frac{1}{2}G(\widehat{\phi }_{p}-\phi
_{0,p})^{2}))^{\prime }WL(\widehat{\phi }_{p}-\phi _{0,p})^{2}+  \notag
\end{eqnarray}%
\begin{eqnarray}
&&\frac{1}{2!}(Z_{1T}+G_{1p}H(Z_{0T}+\frac{1}{2}G(\widehat{\phi }_{p}-\phi
_{0,p})^{2}))^{\prime }WG(\widehat{\phi }_{p}-\phi _{0,p})^{2}+  \notag \\
&&\frac{1}{3!2!}L^{\prime }WG(\widehat{\phi }_{p}-\phi
_{0,p})^{4}+o_{p}(T^{-1})  \notag \\
&=&Z_{0T}^{\prime }W^{1/2}M_{d}W^{1/2}Z_{1T}+Z_{0T}^{\prime
}W^{1/2}M_{d}W^{1/2}G_{1p}HZ_{0T}+  \notag \\
&&(\frac{1}{3}Z_{0T}^{\prime }W^{1/2}M_{d}W^{1/2}L+Z_{1T}^{\prime
}W^{1/2}M_{d}W^{1/2}G+  \notag \\
&&G^{\prime }W^{1/2}M_{d}W^{1/2}G_{1p}HZ_{0T}+Z_{0T}^{\prime
}W^{1/2}M_{d}W^{1/2}G_{1p}HG)(\widehat{\phi }_{p}-\phi _{0,p})^{2}+  \notag
\\
&&(\frac{1}{6}G^{\prime }W^{1/2}M_{d}W^{1/2}L+\frac{1}{2}G^{\prime
}W^{1/2}M_{d}W^{1/2}G_{1p}HG)(\widehat{\phi }_{p}-\phi
_{0,p})^{4}+o_{p}(T^{-1}).  \label{g8}
\end{eqnarray}%
At the minimum of $Q_{T}(\phi )$, we expect $\underline{R}_{1T}(\phi _{0})$
to be negative, i.e., $(\widehat{\phi }_{p}-\phi _{0,p})$ and $R_{1T}$ have
opposite sign. Hence,\vspace*{-0.07in}%
\begin{equation*}
T^{1/4}(\widehat{\phi }_{p}-\phi _{0,p})=(-1)^{B_{1T}}T^{1/4}|\widehat{\phi }%
_{p}-\phi _{0,p}|,
\end{equation*}%
\vspace*{-0.07in}with $B_{1T}=\mathbf{1(}TR_{1T}\geq 0\mathbf{)}$. After
replacing $(\widehat{\phi }_{p}-\phi _{0,p})^{2}$ in (\ref{g8}) by its
expression from (\ref{g6}) and scaling (\ref{g8}) by $T$, we can see, using
the continuous mapping theorem, that $TR_{1T}$ converges in distribution
towards $\mathcal{R}_{1}$:\vspace*{-0.07in}%
\begin{eqnarray}
\mathcal{R}_{1} &=&(\mathcal{Z}_{0}^{\prime }W^{1/2}P_{g}W^{1/2}\mathcal{Z}%
_{0}G^{\prime }-G^{\prime }W^{1/2}P_{g}W^{1/2}\mathcal{Z}_{0}\mathcal{Z}%
_{0}^{\prime })W^{1/2}M_{d}W^{1/2}(\frac{1}{3}L+G_{1p}HG)/\sigma _{G}+ 
\notag \\
&&\mathcal{Z}_{0}^{\prime }W^{1/2}M_{dg}W^{1/2}(\mathcal{Z}_{1}+G_{1p}H%
\mathcal{Z}_{0}).  \label{g9}
\end{eqnarray}%
\vspace*{-0.07in}We actually have that $(\sqrt{T}Z_{0T},$ $\sqrt{T}Z_{1T},$ $%
TR_{1T})$ converges in distribution towards\linebreak $(\mathcal{Z}_{0},$ $%
\mathcal{Z}_{1},$ $\mathcal{R}_{1})$. According to our Lemma 2, when $q>p,$ $%
\mathcal{R}_{1}$ does not have an atom of probability at $0.$

Applying a version of Lemma 1 of DH (2018), we have $(\sqrt{T}Z_{0T},$ $%
\sqrt{T}Z_{1T},$ $(-1)^{B_{1T}})\overset{d}{\rightarrow }(\mathcal{Z}_{0},$ $%
\mathcal{Z}_{1},$ $(-1)^{B_{1}})$, where $B_{1}=\mathbf{1}(\mathcal{R}%
_{1}\geq 0)$. Since $(\sqrt{T}(\widehat{\phi }_{1}-\phi _{0,1}),$ $T^{1/4}|%
\widehat{\phi }_{p}-\phi _{0,p}|,$ $(-1)^{B_{1T}})=O_{p}(1)$, any
subsequence of the left hand side has a further subsequence that converges
in distribution. Using part (b) of Lemma 1, such a subsequence obeys $(\sqrt{%
T}(\widehat{\phi }_{1}-\phi _{0,1}),$ $T^{1/4}|\widehat{\phi }_{p}-\phi
_{0,p}|,$ $(-1)^{B_{1T}})\overset{d}{\rightarrow }(H\mathcal{Z}_{0}+HGV/2,$ $%
\sqrt{V},$ $(-1)^{B_{1}})$. Since the limit distribution does not depend on
the subsequence, the whole sequence converges towards that limit. By the
continuous mapping theorem, we deduce that: $(\sqrt{T}(\widehat{\phi }%
_{1}-\phi _{0,1}),$ $T^{1/4}(\widehat{\phi }_{p}-\phi _{0,p}))\overset{d}{%
\rightarrow }(H\mathcal{Z}_{0}+HGV/2,$ $(-1)^{B_{1}}\sqrt{V})$.

Proof of (b): Part (b) of Lemma 1 gives the limiting distribution of $\sqrt{T%
}(\widehat{\phi }_{p}-\phi _{0,p})^{2}.$ To get the limiting distribution of 
$T^{1/4}(\widehat{\phi }_{p}-\phi _{0,p}),$ it remains to characterize its
sign when $\mathcal{Z}<0$. According to our Lemma 2, when $q=p,$ $\Pr (%
\mathcal{R}_{1}=0|\mathcal{Z}<0\mathbf{)}=1.$ Hence when $q=p,$ $%
(-1)^{B_{1}} $ with $B_{1}=\mathbf{1}(\mathcal{R}_{1}\geq 0)$ will not
correctly describe the behaviour of the sign of $T^{1/4}(\widehat{\phi }%
_{p}-\phi _{0,p})$ in its limiting distribution. However, when $q=p,$ we can
characterize the sign of $T^{1/4}(\widehat{\phi }_{p}-\phi _{0,p})$ by
expanding $m_{T}^{\prime }(\widehat{\phi })W_{T}m_{T}(\widehat{\phi })$
around $\phi _{0}$ up to $o_{p}(T^{-7/4})$. Specifically, the $%
O_{p}(T^{-7/4})$ terms in a Taylor expansion of $m_{T}^{\prime }(\widehat{%
\phi })W_{T}m_{T}(\widehat{\phi })$ up to $o_{p}(T^{-7/4})$ will provide the
sign of $T^{1/4}(\widehat{\phi }_{p}-\phi _{0,p})$ in case $q=p$ as we will
now show.

First we note that the powers of $(\widehat{\phi }_{p}-\phi _{0,p})$ in the $%
O_{p}(T^{-6/4})$ terms in an expansion of $m_{T}^{\prime }(\widehat{\phi }%
)W_{T}m_{T}(\widehat{\phi })$ up to $o_{p}(T^{-7/4})$ are all even and
therefore these terms do not provide information on the sign of $T^{1/4}(%
\widehat{\phi }_{p}-\phi _{0,p})$. Next, we consider the $O_{p}(T^{-7/4})$
terms in the Taylor expansion of $m_{T}^{\prime }(\widehat{\phi })W_{T}m_{T}(%
\widehat{\phi })$ around $\phi _{0}:$

$\underline{R}_{5,T}(\phi _{0})=(\widehat{\phi }_{p}-\phi _{0,p})\times
2R_{5,T}$ where%
\begin{eqnarray}
R_{5,T} &=&\frac{1}{2!}(\kappa _{1,T}\ldots \kappa _{q,T})W(\frac{\partial
m_{T}}{\partial \phi _{p}}(\phi _{0})+\frac{\partial ^{2}m_{T}}{\partial
\phi _{p}\partial \phi _{1}^{\prime }}(\phi _{0})(\widehat{\phi }_{1}-\phi
_{0,1}))+  \notag \\
&&\frac{1}{2!3!}(\kappa _{1,T}\ldots \kappa _{q,T})W(\frac{\partial ^{3}m_{T}%
}{\partial \phi _{p}^{3}}(\phi _{0}))(\widehat{\phi }_{p}-\phi _{0,p})^{2}+ 
\notag \\
&&\frac{1}{3!}(m_{T}(\phi _{0})+\frac{\partial m_{T}}{\partial \phi
_{1}^{\prime }}(\phi _{0})(\widehat{\phi }_{1}-\phi _{0,1}))^{\prime }W(%
\frac{\partial ^{4}m_{T}}{\partial \phi _{p}^{3}\partial \phi _{1}^{\prime }}%
(\phi _{0})(\widehat{\phi }_{1}-\phi _{0,1}))(\widehat{\phi }_{p}-\phi
_{0,p})^{2}+  \notag \\
&&\frac{1}{2!}(\frac{\partial m_{T}}{\partial \phi _{p}}(\phi _{0})+\frac{%
\partial ^{2}m_{T}}{\partial \phi _{p}\partial \phi _{1}^{\prime }}(\phi
_{0})(\widehat{\phi }_{1}-\phi _{0,1}))^{\prime }W(\frac{\partial ^{3}m_{T}}{%
\partial \phi _{p}^{2}\partial \phi _{1}^{\prime }}(\phi _{0})(\widehat{\phi 
}_{1}-\phi _{0,1}))(\widehat{\phi }_{p}-\phi _{0,p})^{2}+  \notag \\
&&\frac{1}{2!3!}(\frac{\partial ^{2}m_{T}}{\partial \phi _{p}^{2}}(\phi
_{0}))^{\prime }W(\frac{\partial ^{4}m_{T}}{\partial \phi _{p}^{3}\partial
\phi _{1}^{\prime }}(\phi _{0})(\widehat{\phi }_{1}-\phi _{0,1}))(\widehat{%
\phi }_{p}-\phi _{0,p})^{4}+  \notag \\
&&\frac{1}{3!2!}(\frac{\partial ^{3}m_{T}}{\partial \phi _{p}^{3}}(\phi
_{0}))^{\prime }W(\frac{\partial ^{3}m_{T}}{\partial \phi _{p}^{2}\partial
\phi _{1}^{\prime }}(\phi _{0})(\widehat{\phi }_{1}-\phi _{0,1}))(\widehat{%
\phi }_{p}-\phi _{0,p})^{4}+  \notag \\
&&\frac{1}{5!}(m_{T}(\phi _{0})+\frac{\partial m_{T}}{\partial \phi
_{1}^{\prime }}(\phi _{0})(\widehat{\phi }_{1}-\phi _{0,1}))^{\prime }W(%
\frac{\partial ^{5}m_{T}}{\partial \phi _{p}^{5}}(\phi _{0}))(\widehat{\phi }%
_{p}-\phi _{0,p})^{4}+  \notag \\
&&\frac{1}{4!}(\frac{\partial m_{T}}{\partial \phi _{p}}(\phi _{0})+\frac{%
\partial ^{2}m_{T}}{\partial \phi _{p}\partial \phi _{1}^{\prime }}(\phi
_{0})(\widehat{\phi }_{1}-\phi _{0,1}))^{\prime }W(\frac{\partial ^{4}m_{T}}{%
\partial \phi _{p}^{4}}(\phi _{0}))(\widehat{\phi }_{p}-\phi _{0,p})^{4}+ 
\notag \\
&&\frac{1}{2!5!}(\frac{\partial ^{2}m_{T}}{\partial \phi _{p}^{2}}(\phi
_{0}))^{\prime }W(\frac{\partial ^{5}m_{T}}{\partial \phi _{p}^{5}}(\phi
_{0}))(\widehat{\phi }_{p}-\phi _{0,p})^{6}+  \notag \\
&&\frac{1}{3!4!}(\frac{\partial ^{3}m_{T}}{\partial \phi _{p}^{3}}(\phi
_{0}))^{\prime }W(\frac{\partial ^{4}m_{T}}{\partial \phi _{p}^{4}}(\phi
_{0}))(\widehat{\phi }_{p}-\phi _{0,p})^{6}.  \label{h1}
\end{eqnarray}%
with $\kappa _{k,T}=(\widehat{\phi }_{1}-\phi _{0,1})^{\prime }K_{k,T}(%
\widehat{\phi }_{1}-\phi _{0,1})$ and $K_{k,T}=\frac{\partial ^{2}m_{k,T}}{%
\partial \phi _{1}\partial \phi _{1}^{\prime }}(\phi _{0})$ for $%
k=1,2,\ldots ,q,$ where $m_{k,T}(\phi )$ is the kth element of $m_{T}(\phi
). $

Recalling that when $q=p,$ then $m(\widehat{\phi })=0$ and hence $m_{T}(\phi
_{0})+\frac{\partial m_{T}}{\partial \phi _{1}^{\prime }}(\phi _{0})(%
\widehat{\phi }_{1}-\phi _{0,1})+\frac{1}{2}\frac{\partial ^{2}m_{T}}{%
\partial \phi _{p}^{2}}(\phi _{0})(\widehat{\phi }_{p}-\phi
_{0,p})^{2}=o_{p}(T^{-1/2})$ (cf. (\ref{f8})), and replacing $(\widehat{\phi 
}_{1}-\phi _{0,1})$ in (\ref{h1}) by its expression from (\ref{g3}), we
obtain%
\begin{eqnarray}
R_{5,T} &=&\frac{1}{2!}(\lambda _{1,T}\ldots \lambda
_{q,T})W(Z_{1T}+G_{1p}H(Z_{0T}+\frac{1}{2}G(\widehat{\phi }_{p}-\phi
_{0,p})^{2}))+  \notag \\
&&\frac{1}{2!3!}(\lambda _{1,T}\ldots \lambda _{q,T})WL(\widehat{\phi }%
_{p}-\phi _{0,p})^{2}+  \notag \\
&&\frac{1}{2!}(Z_{1T}+G_{1p}H(Z_{0T}+\frac{1}{2}G(\widehat{\phi }_{p}-\phi
_{0,p})^{2}))^{\prime }W(G_{1pp}H(Z_{0T}+\frac{1}{2}G(\widehat{\phi }%
_{p}-\phi _{0,p})^{2}))(\widehat{\phi }_{p}-\phi _{0,p})^{2}+  \notag \\
&&\frac{1}{3!2!}L^{\prime }W(G_{1pp}H(Z_{0T}+\frac{1}{2}G(\widehat{\phi }%
_{p}-\phi _{0,p})^{2}))(\widehat{\phi }_{p}-\phi _{0,p})^{4}+  \notag \\
&&\frac{1}{4!}(Z_{1T}+G_{1p}H(Z_{0T}+\frac{1}{2}G(\widehat{\phi }_{p}-\phi
_{0,p})^{2}))^{\prime }WF(\widehat{\phi }_{p}-\phi _{0,p})^{4}+  \notag \\
&&\frac{1}{3!4!}L^{\prime }WF(\widehat{\phi }_{p}-\phi
_{0,p})^{6}+o_{p}(T^{-6/4}).  \label{h2}
\end{eqnarray}%
with $\lambda _{k,T}=\lambda _{1,k,T}+\lambda _{2,k,T}(\widehat{\phi }%
_{p}-\phi _{0,p})^{2}+\lambda _{3,k}(\widehat{\phi }_{p}-\phi _{0,p})^{4},$
where $\lambda _{1,k,T}=Z_{0T}^{\prime }H^{\prime }K_{k}HZ_{0T},$ $\lambda
_{2,k,T}=G^{\prime }H^{\prime }K_{k}HZ_{0T}$ and $\lambda _{3,k}=\frac{1}{4}%
G^{\prime }H^{\prime }K_{k}HG$ for $k=1,2,\ldots ,q.$ Let $\tilde{\lambda}%
_{i,T}=(\lambda _{i,1,T}\ldots \lambda _{i,q,T})^{\prime }$ for $i=1,2$ and
recall that $\tilde{\lambda}_{3}=(\lambda _{3,1}\ldots \lambda
_{3,q})^{\prime }.$ Combining powers in (\ref{h2}) we get%
\begin{eqnarray}
R_{5,T} &=&\frac{1}{2!}\tilde{\lambda}_{1,T}^{\prime
}W(Z_{1T}+G_{1p}HZ_{0T})+  \notag \\
&&\frac{1}{2!}\tilde{\lambda}_{1,T}^{\prime }W(\frac{1}{2}G_{1p}HG+\frac{1}{%
3!}L)(\widehat{\phi }_{p}-\phi _{0,p})^{2}+  \notag \\
&&\frac{1}{2!}\tilde{\lambda}_{2,T}^{\prime }W(Z_{1T}+G_{1p}HZ_{0T})(%
\widehat{\phi }_{p}-\phi _{0,p})^{2}+  \notag \\
&&\frac{1}{2!}\tilde{\lambda}_{2,T}^{\prime }W(\frac{1}{2}G_{1p}HG+\frac{1}{%
3!}L)(\widehat{\phi }_{p}-\phi _{0,p})^{4}+  \notag \\
&&\frac{1}{2!}\tilde{\lambda}_{3}^{\prime }W(Z_{1T}+G_{1p}HZ_{0T})(\widehat{%
\phi }_{p}-\phi _{0,p})^{4}+  \notag \\
&&\frac{1}{2!}\tilde{\lambda}_{3}^{\prime }W(\frac{1}{2}G_{1p}HG+\frac{1}{3!}%
L)(\widehat{\phi }_{p}-\phi _{0,p})^{6}+  \notag \\
&&\frac{1}{2!}(Z_{1T}+G_{1p}HZ_{0T})^{\prime }W(G_{1pp}HZ_{0T})(\widehat{%
\phi }_{p}-\phi _{0,p})^{2}+  \notag \\
&&\frac{1}{2!}(Z_{1T}+G_{1p}HZ_{0T})^{\prime }W(\frac{1}{2}G_{1pp}HG)(%
\widehat{\phi }_{p}-\phi _{0,p})^{4}+  \notag \\
&&\frac{1}{2!}(\frac{1}{2}G_{1p}HG)^{\prime }W(G_{1pp}HZ_{0T})(\widehat{\phi 
}_{p}-\phi _{0,p})^{4}+  \notag \\
&&\frac{1}{2!}(\frac{1}{2}G_{1p}HG)^{\prime }W(\frac{1}{2}G_{1pp}HG)(%
\widehat{\phi }_{p}-\phi _{0,p})^{6}+  \notag \\
&&\frac{1}{3!2!}L^{\prime }W(G_{1pp}HZ_{0T})(\widehat{\phi }_{p}-\phi
_{0,p})^{4}+\frac{1}{3!2!}L^{\prime }W(\frac{1}{2}G_{1pp}HG)(\widehat{\phi }%
_{p}-\phi _{0,p})^{6}+  \notag \\
&&\frac{1}{4!}(Z_{1T}+G_{1p}HZ_{0T})^{\prime }WF(\widehat{\phi }_{p}-\phi
_{0,p})^{4}+\frac{1}{4!}(\frac{1}{2}G_{1p}HG)^{\prime }WF(\widehat{\phi }%
_{p}-\phi _{0,p})^{6}+  \notag \\
&&\frac{1}{3!4!}L^{\prime }WF(\widehat{\phi }_{p}-\phi
_{0,p})^{6}+o_{p}(T^{-6/4}).  \label{h3}
\end{eqnarray}

However, apart from $R_{5,T}$, the `$O_{p}(T^{-5/4})$ and $O_{p}(T^{-6/4})$
terms' in the Taylor expansion of $m_{T}^{\prime }(\widehat{\phi }%
)W_{T}m_{T}(\widehat{\phi })$ around $\phi _{0}$ also contribute to the $%
O_{p}(T^{-7/4})$ terms in the Taylor expansion of $m_{T}^{\prime }(\widehat{%
\phi })W_{T}m_{T}(\widehat{\phi })$ because they contain the random vector $%
\sqrt{T}m_{T}(\phi _{0})+D\sqrt{T}(\widehat{\phi }_{1}-\phi _{0,1})+\frac{1}{%
2}G\sqrt{T}(\widehat{\phi }_{p}-\phi _{0,p})^{2}$ which equals $(\widehat{%
\phi }_{p}-\phi _{0,p})S_{1,T}+T^{-1/2}S_{2,T}+o_{p}(T^{-1/2})$ where $%
S_{1,T}=O_{p}(1)$ and $S_{2,T}=O_{p}(1)$ with $S_{1,T}$ and $\underline{S}%
_{2,T}$ known, i.e., they are determined by the application, $S_{1,T}\overset%
{d}{\rightarrow }S_{1},$ and $S_{2,T}\overset{d}{\rightarrow }S_{2}$. The `$%
O_{p}(T^{-5/4})$ term' $\underline{R}_{1T}(\phi _{0})=(\widehat{\phi }%
_{p}-\phi _{0,p})\times 2R_{1T}=2(\widehat{\phi }_{p}-\phi _{0,p})^{2}T^{-1}%
\widetilde{R}_{1T}(\sqrt{T}(\widehat{\phi }_{p}-\phi _{0,p})^{2},S_{1,T})+2(%
\widehat{\phi }_{p}-\phi _{0,p})T^{-3/2}\widetilde{R}_{1T}(\sqrt{T}(\widehat{%
\phi }_{p}-\phi _{0,p})^{2},S_{2,T})+o_{p}(T^{-7/4})$ where 
\begin{eqnarray}
\widetilde{R}_{1T}(\sqrt{T}(\widehat{\phi }_{p}-\phi _{0,p})^{2},S_{T}) &=&%
\sqrt{T}(\frac{\partial m_{T}}{\partial \phi _{p}}(\phi _{0})+\frac{\partial
^{2}m_{T}}{\partial \phi _{p}\partial \phi _{1}^{\prime }}(\phi _{0})(%
\widehat{\phi }_{1}-\phi _{0,1}))^{\prime }WS_{T}+  \notag \\
&&\frac{1}{3!}(\frac{\partial ^{3}m_{T}}{\partial \phi _{p}^{3}}(\phi
_{0}))^{\prime }WS_{T}\sqrt{T}(\widehat{\phi }_{p}-\phi _{0,p})^{2}.
\label{r1til}
\end{eqnarray}%
The `$O_{p}(T^{-6/4})$ term' is given by $\underline{R}_{4,T}(\phi
_{0})=T^{-3/2}R_{4,T}(\sqrt{T}(\widehat{\phi }_{p}-\phi
_{0,p})^{2})+T^{-3/2}\times $\linebreak $(\widehat{\phi }_{p}-\phi _{0,p})%
\widetilde{R}_{4,T}(\sqrt{T}(\widehat{\phi }_{p}-\phi
_{0,p})^{2},S_{1,T})+o_{p}(T^{-7/4})$ where%
\begin{eqnarray}
&&R_{4,T}(\sqrt{T}(\widehat{\phi }_{p}-\phi _{0,p})^{2})=T((\frac{\partial
m_{T}}{\partial \phi _{p}}(\phi _{0})+\frac{\partial ^{2}m_{T}}{\partial
\phi _{p}\partial \phi _{1}^{\prime }}(\phi _{0})(\widehat{\phi }_{1}-\phi
_{0,1}))^{\prime }W(\frac{\partial m_{T}}{\partial \phi _{p}}(\phi _{0})+%
\text{ }  \notag \\
&&\quad \frac{\partial ^{2}m_{T}}{\partial \phi _{p}\partial \phi
_{1}^{\prime }}(\phi _{0})(\widehat{\phi }_{1}-\phi _{0,1}))+\frac{1}{2!}%
(\kappa _{1,T}\ldots \kappa _{q,T})^{\prime }W\frac{\partial ^{2}m_{T}}{%
\partial \phi _{p}^{2}}(\phi _{0}))\sqrt{T}(\widehat{\phi }_{p}-\phi
_{0,p})^{2}+  \notag \\
&&\frac{2}{3!}\sqrt{T}(\frac{\partial m_{T}}{\partial \phi _{p}}(\phi _{0})+%
\frac{\partial ^{2}m_{T}}{\partial \phi _{p}\partial \phi _{1}^{\prime }}%
(\phi _{0})(\widehat{\phi }_{1}-\phi _{0,1}))^{\prime }W\frac{\partial
^{3}m_{T}}{\partial \phi _{p}^{3}}(\phi _{0})T(\widehat{\phi }_{p}-\phi
_{0,p})^{4}+  \notag \\
&&\frac{1}{3!3!}(\frac{\partial ^{3}m_{T}}{\partial \phi _{p}^{3}}(\phi
_{0}))^{\prime }W\frac{\partial ^{3}m_{T}}{\partial \phi _{p}^{3}}(\phi
_{0})T^{3/2}(\widehat{\phi }_{p}-\phi _{0,p})^{6}\quad  \label{r4}
\end{eqnarray}%
and%
\begin{eqnarray}
\widetilde{R}_{4,T}(\sqrt{T}(\widehat{\phi }_{p}-\phi _{0,p})^{2},S_{T}) &=&%
\sqrt{T}(\frac{\partial ^{3}m_{T}}{\partial \phi _{p}^{2}\partial \phi
_{1}^{\prime }}(\phi _{0})(\widehat{\phi }_{1}-\phi _{0,1}))^{\prime }WS_{T}%
\sqrt{T}(\widehat{\phi }_{p}-\phi _{0,p})^{2}+  \notag \\
&&\frac{1}{12}(\frac{\partial ^{4}m_{T}}{\partial \phi _{p}^{4}}(\phi
_{0}))^{\prime }WS_{T}T(\widehat{\phi }_{p}-\phi _{0,p})^{4}.  \label{r4til}
\end{eqnarray}

Define $\widetilde{W}_{T}=(\frac{\partial m_{T}}{\partial \phi _{1}^{\prime }%
}(\phi _{0})$\ $\frac{1}{2}\frac{\partial ^{2}m_{T}}{\partial \phi _{p}^{2}}%
(\phi _{0}))^{\prime }W(\frac{\partial m_{T}}{\partial \phi _{1}^{\prime }}%
(\phi _{0})$\ $\frac{1}{2}\frac{\partial ^{2}m_{T}}{\partial \phi _{p}^{2}}%
(\phi _{0}))$ and partition $\widetilde{W}_{T}$ as\linebreak $\left[ 
\begin{array}{cc}
\widetilde{W}_{T,1,1} & \widetilde{W}_{T,1,p} \\ 
\widetilde{W}_{T,p,1} & \widetilde{W}_{T,p,p}%
\end{array}%
\right] $ $\vspace{0.06in}$where $\widetilde{W}_{N,p,p}$ is a scalar and $%
\widetilde{W}_{N,1,p}=\widetilde{W}_{N,p,1}^{\prime }$ is a vector. If $%
Z_{T}<0,$ then $Tm_{T}^{\prime }(\widehat{\phi })W_{T}m_{T}(\widehat{\phi }%
)=T\underline{K}_{T}(\phi _{0})+T^{-1/2}R_{4,T}(\sqrt{T}(\widehat{\phi }%
_{p}-\phi _{0,p})^{2})+2(\widehat{\phi }_{p}-\phi _{0,p})^{2}\widetilde{R}%
_{1T}(\sqrt{T}(\widehat{\phi }_{p}-\phi _{0,p})^{2},S_{1,T})+o_{p}(T^{-1/2})$
is in fact minimized at%
\begin{equation}
\sqrt{T}(\widehat{\phi }_{p}-\phi
_{0,p})^{2}=V_{T}+T^{-1/2}Y_{T}+o_{p}(T^{-1/2})
\end{equation}%
where $V_{T}=-2\sqrt{T}Z_{T}\mathbf{1}(\sqrt{T}Z_{T}<0)/\sigma _{G}$ and 
\begin{eqnarray}
Y_{T} &\equiv &-\frac{1}{2}\widetilde{Q}_{T}^{-1}(T((\frac{\partial m_{T}}{%
\partial \phi _{p}}(\phi _{0})+\frac{\partial ^{2}m_{T}}{\partial \phi
_{p}\partial \phi _{1}^{\prime }}(\phi _{0})(\widehat{\phi }_{1}-\phi
_{0,1}))^{\prime }W(\frac{\partial m_{T}}{\partial \phi _{p}}(\phi _{0})+ 
\notag \\
&&\frac{\partial ^{2}m_{T}}{\partial \phi _{p}\partial \phi _{1}^{\prime }}%
(\phi _{0})(\widehat{\phi }_{1}-\phi _{0,1}))+\frac{1}{2}(\kappa
_{1,T}\ldots \kappa _{q,T})^{\prime }W\frac{\partial ^{2}m_{T}}{\partial
\phi _{p}^{2}}(\phi _{0}))+2\widetilde{R}_{1T}(V_{T},S_{1,T})+  \notag \\
&&\frac{2}{3}\sqrt{T}(\frac{\partial m_{T}}{\partial \phi _{p}}(\phi _{0})+%
\frac{\partial ^{2}m_{T}}{\partial \phi _{p}\partial \phi _{1}^{\prime }}%
(\phi _{0})(\widehat{\phi }_{1}-\phi _{0,1}))^{\prime }W\frac{\partial
^{3}m_{T}}{\partial \phi _{p}^{3}}(\phi _{0})V_{T}+  \notag \\
&&\frac{1}{12}(\frac{\partial ^{3}m_{T}}{\partial \phi _{p}^{3}}(\phi
_{0}))^{\prime }W\frac{\partial ^{3}m_{T}}{\partial \phi _{p}^{3}}(\phi
_{0})V_{T}^{2})  \label{yt}
\end{eqnarray}%
with%
\begin{equation*}
\widetilde{Q}_{T}=\widetilde{W}_{T,p,p}-\widetilde{W}_{T,1,p}^{\prime }%
\widetilde{W}_{T,1,1}^{-1}\widetilde{W}_{T,1,p}.
\end{equation*}

Let $\widetilde{W}=(D$\ $\frac{1}{2}G)^{\prime }W(D$\ $\frac{1}{2}G)$ and
partition $\widetilde{W}$ in the same way as $\widetilde{W}_{T}.$ Replacing $%
(\widehat{\phi }_{1}-\phi _{0,1})$ in (\ref{r1til}), (\ref{r4til}) and (\ref%
{yt}) by its expression from (\ref{g3}), we obtain%
\begin{eqnarray}
\widetilde{R}_{1T}(\sqrt{T}(\widehat{\phi }_{p}-\phi _{0,p})^{2},S_{T}) &=&%
\sqrt{T}(Z_{1T}+G_{1p}H(Z_{0T}+\frac{1}{2}G(\widehat{\phi }_{p}-\phi
_{0,p})^{2}))^{\prime }WS_{T}+  \notag \\
&&\frac{1}{3!}L^{\prime }WS_{T}\sqrt{T}(\widehat{\phi }_{p}-\phi
_{0,p})^{2}+o_{p}(1).  \label{xr1til}
\end{eqnarray}%
\begin{eqnarray}
\widetilde{R}_{4,T}(\sqrt{T}(\widehat{\phi }_{p}-\phi _{0,p})^{2},S_{T}) &=&%
\sqrt{T}(G_{1pp}H(Z_{0T}+\frac{1}{2}G(\widehat{\phi }_{p}-\phi
_{0,p})^{2}))^{\prime }WS_{T}\sqrt{T}(\widehat{\phi }_{p}-\phi _{0,p})^{2}+ 
\notag \\
&&\frac{1}{12}F^{\prime }WS_{T}T(\widehat{\phi }_{p}-\phi
_{0,p})^{4}+o_{p}(1).  \label{xr4til}
\end{eqnarray}%
and 
\begin{eqnarray}
Y_{T} &\equiv &-\frac{1}{2}\widetilde{Q}^{-1}(T((Z_{1T}+G_{1p}H(Z_{0T}+\frac{%
1}{2}G(\widehat{\phi }_{p}-\phi _{0,p})^{2}))^{\prime
}W(Z_{1T}+G_{1p}H(Z_{0T}+  \notag \\
&&\frac{1}{2}G(\widehat{\phi }_{p}-\phi _{0,p})^{2}))+\frac{1}{2}(\lambda
_{1,T}\ldots \lambda _{q,T})^{\prime }WG)+2\widetilde{R}_{1T}(V_{T},S_{1,T})+
\notag \\
&&\frac{2}{3}\sqrt{T}(Z_{1T}+G_{1p}H(Z_{0T}+\frac{1}{2}G(\widehat{\phi }%
_{p}-\phi _{0,p})^{2}))^{\prime }WLV_{T}+\frac{1}{12}L^{\prime
}WLV_{T}^{2})+o_{p}(1)\qquad  \label{xyt}
\end{eqnarray}%
with $\widetilde{Q}=\widetilde{W}_{p,p}-\widetilde{W}_{1,p}^{\prime }%
\widetilde{W}_{1,1}^{-1}\widetilde{W}_{1,p}.$

Let $\underline{R}_{2,T}(\phi _{0})=(\widehat{\phi }_{p}-\phi _{0,p})\times
2R_{2,T}$ with $R_{2,T}=R_{5,T}+T^{-3/2}\widetilde{R}_{1T}(\sqrt{T}(\widehat{%
\phi }_{p}-\phi _{0,p})^{2},S_{2,T})+\frac{1}{2}T^{-3/2}\widetilde{R}_{4,T}(%
\sqrt{T}(\widehat{\phi }_{p}-\phi _{0,p})^{2},S_{1,T}).$ At the minimum of $%
Q_{T}(\phi )$, we expect $\underline{R}_{2,T}(\phi _{0})$ to be negative,
i.e., $(\widehat{\phi }_{p}-\phi _{0,p})$ and $R_{2,T}$ have opposite sign.
Hence,%
\begin{equation}
T^{1/4}(\widehat{\phi }_{p}-\phi _{0,p})=(-1)^{B_{2,T}}T^{1/4}|\widehat{\phi 
}_{p}-\phi _{0,p}|,  \label{h4}
\end{equation}%
with $B_{2,T}=\mathbf{1(}T^{6/4}R_{2,T}\geq 0\mathbf{)}$. After replacing $(%
\widehat{\phi }_{p}-\phi _{0,p})^{2}$ in $R_{2,T}$ by its expression from (%
\ref{g6}), we can see, using the continuous mapping theorem, that $%
T^{6/4}R_{2,T}$ converges in distribution towards, say, $\mathcal{R}_{2}.$
The formula for $\mathcal{R}_{2}$ is given by $R_{2,T}$ with $\sqrt{T}%
Z_{0T}, $ $\sqrt{T}Z_{1T},$ $S_{1,T},$ $S_{2,T}$ and\ powers of $\sqrt{T}(%
\widehat{\phi }_{p}-\phi _{0,p})^{2}$\ and $V_{T}$ replaced by $\mathcal{Z}%
_{0},$ $\mathcal{Z}_{1},$ $S_{1,},$ $S_{2}$ and powers of $V$ $(=-2\mathcal{Z%
}\mathbf{1}(\mathcal{Z}<0)/\sigma _{G}),$ respectively, and with the
unspecified $o_{p}(T^{-6/4})$ and $o_{p}(1)$ terms at the very end of (\ref%
{h3}), (\ref{xr1til}), (\ref{xr4til}) and (\ref{xyt}) omitted.

We actually have that $(\sqrt{T}Z_{0T},$ $\sqrt{T}Z_{1T},$ $T^{6/4}R_{2,T})$
converges in distribution towards\linebreak $(\mathcal{Z}_{0},$ $\mathcal{Z}%
_{1},$ $\mathcal{R}_{2})$. Furthermore, recall that we have assumed that $%
\Pr (\mathcal{R}_{2}=0)=0.$

Applying a version of Lemma 1 of DH (2018), we have $(\sqrt{T}Z_{0T},$ $%
\sqrt{T}Z_{1T},$ $(-1)^{B_{2,T}})\overset{d}{\rightarrow }(\mathcal{Z}_{0},$ 
$\mathcal{Z}_{1},$ $(-1)^{B_{2}})$, where $B_{2}=\mathbf{1}(\mathcal{R}%
_{2}\geq 0)$. Since $(\sqrt{T}(\widehat{\phi }_{1}-\phi _{0,1}),$ $T^{1/4}|%
\widehat{\phi }_{p}-\phi _{0,p}|,$ $(-1)^{B_{2,T}})=O_{p}(1)$, any
subsequence of the left hand side has a further subsequence that converges
in distribution. Using part (b) of Lemma 1, such a subsequence obeys $(\sqrt{%
T}(\widehat{\phi }_{1}-\phi _{0,1}),$ $T^{1/4}|\widehat{\phi }_{p}-\phi
_{0,p}|,$ $(-1)^{B_{2,T}})\overset{d}{\rightarrow }(H\mathcal{Z}_{0}+HGV/2,$ 
$\sqrt{V},$ $(-1)^{B_{2}})$. Since the limit distribution does not depend on
the subsequence, the whole sequence converges towards that limit. By the
continuous mapping theorem, we deduce that: $(\sqrt{T}(\widehat{\phi }%
_{1}-\phi _{0,1}),$ $T^{1/4}(\widehat{\phi }_{p}-\phi _{0,p}))\overset{d}{%
\rightarrow }(H\mathcal{Z}_{0}+HGV/2,$ $(-1)^{B_{2}}\sqrt{V})$. ${\tiny %
\blacksquare }$

\begin{lemma}
(a1) When $q=p,$ then $\Pr (\mathcal{R}_{1}=0|\mathcal{Z}<0\mathbf{)}=1.$

(a2) When $q>p,$ then $\Pr (\mathcal{R}_{1}=0\mathbf{)}=0.$

(b) When $q=p,$ then $\Pr (\mathcal{R}_{2}=0|\mathcal{Z}<0)>0$ iff $F=0,$ $%
G_{1pp}=0,$ $K_{k}=0$ for $k=1,2,\ldots ,q,$ and $S_{2}=0$.
\end{lemma}

\textbf{Proof of Lemma 2. }$\mathcal{R}_{1}$ is the sum of two terms with
the first term given in the first line of (\ref{g9}) and the second term
given in the second line of (\ref{g9}).

(a1) We will first show that$\vspace{-0.04in}$ 
\begin{equation*}
(\mathcal{Z}_{0}^{\prime }W^{1/2}P_{g}W^{1/2}\mathcal{Z}_{0}G^{\prime
}-G^{\prime }W^{1/2}P_{g}W^{1/2}\mathcal{Z}_{0}\mathcal{Z}_{0}^{\prime
})W^{1/2}M_{d}W^{1/2}\mathbf{1(}\mathcal{Z}<0\mathbf{)}=0
\end{equation*}%
so that the first term of $\mathcal{R}_{1}$ in (\ref{g9}) equals $0$ when $%
\mathcal{Z}<0.$

Recall that 
\begin{equation*}
P_{g}=M_{d}W^{1/2}G(G^{\prime }W^{1/2}M_{d}W^{1/2}G)^{-1}G^{\prime
}W^{1/2}M_{d}
\end{equation*}%
where $M_{d}=I-W^{1/2}D(D^{\prime }WD)^{-1}D^{\prime }W^{1/2}.$

Define$\vspace{-0.23in}$%
\begin{equation*}
Z_{1}=-2\mathcal{Z}/\sigma _{G}=-2(G^{\prime
}W^{1/2}M_{d}W^{1/2}G)^{-1}G^{\prime }W^{1/2}M_{d}W^{1/2}\mathcal{Z}_{0}.
\end{equation*}%
Note that $\mathbf{1(}\mathcal{Z}<0\mathbf{)=1(}Z_{1}>0)$ and $\sqrt{T}(%
\widehat{\phi }_{p}-\phi _{0,p})^{2}\rightarrow ^{d}V=Z_{1}\mathbf{1(}%
Z_{1}>0).$

Now$\vspace{-0.12in}$%
\begin{gather*}
(\mathcal{Z}_{0}^{\prime }W^{1/2}P_{g}W^{1/2}\mathcal{Z}_{0}G^{\prime
}-G^{\prime }W^{1/2}M_{d}W^{1/2}\mathcal{Z}_{0}\mathcal{Z}_{0}^{\prime
})W^{1/2}M_{d}W^{1/2}\mathbf{1(}\mathcal{Z}<0\mathbf{)}= \\
(\mathcal{Z}_{0}^{\prime }W^{1/2}M_{d}W^{1/2}G(G^{\prime
}W^{1/2}M_{d}W^{1/2}G)^{-1}G^{\prime }W^{1/2}M_{d}W^{1/2}\mathcal{Z}%
_{0}G^{\prime }- \\
G^{\prime }W^{1/2}M_{d}W^{1/2}\mathcal{Z}_{0}\mathcal{Z}_{0}^{\prime
})W^{1/2}M_{d}W^{1/2}\mathbf{1(}\mathcal{Z}<0\mathbf{)}= \\
(-\frac{1}{2}\mathcal{Z}_{0}^{\prime }W^{1/2}M_{d}W^{1/2}GZ_{1}G^{\prime
}-G^{\prime }W^{1/2}M_{d}W^{1/2}\mathcal{Z}_{0}\mathcal{Z}_{0}^{\prime
})W^{1/2}M_{d}W^{1/2}\mathbf{1(}\mathcal{Z}<0\mathbf{)}= \\
-\mathcal{Z}_{0}^{\prime }W^{1/2}M_{d}W^{1/2}G(\frac{1}{2}Z_{1}G^{\prime }+%
\mathcal{Z}_{0}^{\prime })W^{1/2}M_{d}W^{1/2}\mathbf{1(}\mathcal{Z}<0\mathbf{%
)}=0
\end{gather*}%
because $(\mathcal{Z}_{0}+\frac{1}{2}Z_{1}G)^{\prime }W^{1/2}M_{d}W^{1/2}%
\mathbf{1(}Z_{1}>0)=0.$ The latter can be shown as follows. If $q=p$ and $%
Z_{T}<0$, then upon replacing $(\widehat{\phi }_{1}-\phi _{0,1},$ $(\widehat{%
\phi }_{p}-\phi _{0,p})^{2})^{\prime }$ in (\ref{g4}) by\linebreak $-(D$ $%
\frac{1}{2}G)^{-1}m_{T}(\phi _{0})+o_{p}(T^{-1/2})$, we obtain $\underline{K}%
_{T}(\phi _{0})=o_{p}(T^{-1})$. Hence it follows from (\ref{g2}) that if $%
q=p $ and $Z_{T}<0,$ then $m_{T}^{\prime }(\widehat{\phi })Wm_{T}(\widehat{%
\phi })\leq $ $o_{p}(T^{-1})$ and hence $m_{T}(\widehat{\phi }%
)=o_{p}(T^{-1/2}).$ Using this result, that $\sqrt{T}Z_{T}\rightarrow ^{d}%
\mathcal{Z}$ and that $\sqrt{T}(\widehat{\phi }_{1}-\phi _{0,1})\rightarrow
^{d}\mathcal{Z}_{2}\equiv H\mathcal{Z}_{0}+HGV/2$, it follows from (\ref{f8}%
) that $(\mathcal{Z}_{0}+D\mathcal{Z}_{2}+\frac{1}{2}Z_{1}G)\mathbf{1(}%
Z_{1}>0)=0$.\footnote{%
I thank P. Dovonon and A. Hall for pointing out a flaw in an earlier version
of this argument.} Hence$\vspace{-0.12in}$ 
\begin{eqnarray*}
W^{-1/2}M_{d}W^{1/2}(\mathcal{Z}_{0}+\frac{1}{2}Z_{1}G)\mathbf{1(}Z_{1}
&>&0)= \\
(I-D(D^{\prime }WD)^{-1}D^{\prime }W)(\mathcal{Z}_{0}+D\mathcal{Z}_{2}+\frac{%
1}{2}Z_{1}G)\mathbf{1(}Z_{1} &>&0)=0.
\end{eqnarray*}

Next, we will show that $\mathcal{Z}_{0}^{\prime }W^{1/2}M_{dg}W^{1/2}%
\mathbf{1(}\mathcal{Z}<0\mathbf{)}=0$ so that the second term of $\mathcal{R}%
_{1}$ in (\ref{g9}) equals $0$ when $\mathcal{Z}<0$. Using $%
M_{dg}=M_{d}-P_{g}$ and the definition of $Z_{1}$, we obtain$\vspace{-0.07in}
$%
\begin{equation*}
\mathcal{Z}_{0}^{\prime }W^{1/2}M_{dg}W^{1/2}\mathbf{1(}\mathcal{Z}<0\mathbf{%
)}=(\mathcal{Z}_{0}^{\prime }+\frac{1}{2}Z_{1}G^{\prime })W^{1/2}M_{d}W^{1/2}%
\mathbf{1(}Z_{1}>0)=0
\end{equation*}%
again because $W^{-1/2}M_{d}W^{1/2}(\mathcal{Z}_{0}+\frac{1}{2}Z_{1}G)%
\mathbf{1(}Z_{1}>0)=0.$ We conclude that if $q=p,$ then $\mathcal{R}_{1}%
\mathbf{1(}\mathcal{Z}<0\mathbf{)}=0,$ which means that $\Pr (\mathcal{R}%
_{1}=0|\mathcal{Z}<0\mathbf{)}=1.$

We can actually show that $\mathcal{R}_{1}=0$ when $q=p$. It is easily
verified that the matrix $M_{dg}$ projects a vector on the orthogonal
complement of $(W^{1/2}D$ $W^{1/2}G).$ When $q=p,$ $Rank(D$ $G)=p=q.$
Moreover, $W$ has full rank. Hence when $q=p,$ then $M_{dg}=\mathbf{0}$ and $%
(\frac{1}{2}Z_{1}G^{\prime }+\mathcal{Z}_{0}^{\prime })W^{1/2}M_{d}W^{1/2}=%
\mathcal{Z}_{0}^{\prime }W^{1/2}M_{dg}W^{1/2}=0$ so that both terms of $%
\mathcal{R}_{1}$ equal $0.$

(a2) Note that only the second term of $\mathcal{R}_{1},$ that is, $\mathcal{%
Z}_{0}^{\prime }W^{1/2}M_{dg}W^{1/2}(\mathcal{Z}_{1}+G_{1p}H\mathcal{Z}_{0})$
depends on $\mathcal{Z}_{1}.$ It therefore suffices to show that $\Pr (%
\mathcal{Z}_{0}^{\prime }W^{1/2}M_{dg}W^{1/2}\mathcal{Z}_{1}=0)=0.$%
\linebreak The latter follows from $M_{dg}^{2}=M_{dg},$ $Rank(M_{dg})=q-p>0,$
and the fact that\linebreak $((W^{1/2}\mathcal{Z}_{0})^{\prime }$ $(W^{1/2}%
\mathcal{Z}_{1})^{\prime })^{\prime }$ has a continuous multivariate
(normal) distribution with mean zero and a covariance matrix of full rank.
We conclude that when $q>p$, $\Pr (\mathcal{R}_{1}=0\mathbf{)}=0.$

(b) When $q=p$ and $\mathcal{Z}<0,$ it is easily verified that $\mathcal{R}%
_{2}=0$ if and only if $F=0,$ $G_{1pp}=0,$ $K_{k}=0$ for $k=1,2,\ldots ,q,$
and $S_{2}=0$.\quad ${\tiny \blacksquare }\medskip $

\noindent \textbf{Optimal weight matrix for }$\widehat{\phi }_{1}:\smallskip 
$

Recall that $\sqrt{T}(\widehat{\phi }_{1}-\phi _{0,1})\overset{d}{%
\rightarrow }H(\mathcal{Z}_{0}+GV/2)$ with $H=-(D^{\prime }WD)^{-1}D^{\prime
}W.$ We assume that $\widehat{W}_{opt}$ or $\hat{\Psi}_{p}^{-1}$ has been
used to compute $\widehat{\phi }_{p}$. Let $\Psi _{1}=\Psi _{1}(W_{opt})$ or 
$\Psi _{1}=\Psi _{1}(\Psi _{p}^{-1}).$ If $W=\Psi _{1}^{-1},$ then the
limiting MSE of $\widehat{\phi }_{1}$ is $(D^{\prime }\Psi _{1}^{-1}D)^{-1}.$
The matrix $\Psi _{1}^{-1}$ is an optimal weight matrix for $\widehat{\phi }%
_{1}$ if for any $W$ the matrix $(H\Psi _{1}H^{\prime }-(D^{\prime }\Psi
_{1}^{-1}D)^{-1})$ is p.s.d.s. Now, $H\Psi _{1}H^{\prime }-(D^{\prime }\Psi
_{1}^{-1}D)^{-1}=H\Psi _{1}^{1/2}(I-\Psi _{1}^{-1/2}D(D^{\prime }\Psi
_{1}^{-1}D)^{-1}D^{\prime }\Psi _{1}^{-1/2})\Psi _{1}^{1/2}H^{\prime }$ is
p.s.d.s. because $I-\Psi _{1}^{-1/2}D(D^{\prime }\Psi
_{1}^{-1}D)^{-1}D^{\prime }\Psi _{1}^{-1/2}$ is idempotent.\quad ${\tiny %
\blacksquare }\medskip $

\noindent O\textbf{ptimal weight matrix for }$\widehat{\phi }_{p}:\smallskip 
$

Let $\widehat{\phi }_{p}^{\ast }$ and $\widehat{\phi }_{p}$ denote the GMM
estimators for $\phi _{p}$ that use, respectively, $\Psi _{p}^{-1}$ and an
arbitrary matrix $W$ as weight matrices. Recall that $\sqrt{T}(\widehat{\phi 
}_{p}-\phi _{0,p})^{2}\overset{d}{\rightarrow }V$ and that $\sigma
_{G}=G^{\prime }W^{1/2}M_{d}W^{1/2}G.$ Using the properties of a half-normal
distribution, the limiting MSE\ of $\widehat{\phi }_{p}$ can easily be shown
to be equal to $(G^{\prime }W^{1/2}M_{d}W^{1/2}\Psi
_{p}W^{1/2}M_{d}W^{1/2}G)^{1/2}/\sigma _{G}\times \sqrt{2/\pi }.$

Let $M_{p}=I-\Psi _{p}^{-1/2}G(G^{\prime }\Psi _{p}^{-1/2}M_{d}(\Psi
_{p}^{-1})\Psi _{p}^{-1/2}G)^{-1}G^{\prime }\Psi _{p}^{-1/2}.$ Then the
difference\linebreak between the squares of the limiting MSEs of $\widehat{%
\phi }_{p}$ and $\widehat{\phi }_{p}^{\ast }$ is proportional to\linebreak $%
\Delta =G^{\prime }W^{1/2}M_{d}W^{1/2}\Psi _{p}^{1/2}M_{p}\Psi
_{p}^{1/2}W^{1/2}M_{d}W^{1/2}G.$ We will show that $\Delta \geq 0$ for all $%
W.$

Let $u=\Psi _{p}^{-1/2}G$ and $v=\Psi _{p}^{-1/2}D$. Then $%
M_{p}=I-u(u^{\prime }u-u^{\prime }v(v^{\prime }v)^{-1}v^{\prime
}u)^{-1}u^{\prime }.$

When $q>p=1$ or $u^{\prime }v=0,$ then $M_{p}=I-u(u^{\prime
}u)^{-1}u^{\prime },$ which is idempotent, and hence $\Delta \geq 0$ for all 
$W.$ It follows that in this case $\Psi _{p}^{-1}$ is an optimal (in the
sense of limiting MSE minimising) weight matrix for\textbf{\ }$\widehat{\phi 
}_{p}$ and the limiting MSE of $\widehat{\phi }_{p}$ is given by $(G^{\prime
}\Psi _{1}^{-1}G)^{-1/2}\sqrt{2/\pi }.$

When $q>p>1$ and $u^{\prime }v\neq 0,$ then $M_{p}$ is no longer idempotent.
Let $\tilde{W}=\Psi _{p}^{1/2}W\Psi _{p}^{1/2}$ and $M_{\tilde{W}%
^{1/2}v}=I_{q}-P_{\tilde{W}^{1/2}v}=I_{q}-\tilde{W}^{1/2}v(v^{\prime }\tilde{%
W}v)^{-1}v^{\prime }\tilde{W}^{1/2}.$ It follows that $\Psi
_{p}^{1/2}W^{1/2}M_{d}W^{1/2}\Psi _{p}^{1/2}=\tilde{W}-\tilde{W}v(v^{\prime }%
\tilde{W}v)^{-1}v^{\prime }\tilde{W}=\tilde{W}^{1/2}M_{\tilde{W}^{1/2}v}%
\tilde{W}^{1/2}$ and that $\Delta =u^{\prime }\tilde{W}^{1/2}M_{\tilde{W}%
^{1/2}v}\tilde{W}^{1/2}M_{p}\tilde{W}^{1/2}M_{\tilde{W}^{1/2}v}\tilde{W}%
^{1/2}u.$

Let $M_{u}=I-u(u^{\prime }u)^{-1}u^{\prime },$ $P_{v}=v(v^{\prime
}v)^{-1}v^{\prime }$ and $M_{v}=I-P_{v}.$ Then $M_{p}=(u^{\prime
}uM_{u}-(u^{\prime }P_{v}u)I)/(u^{\prime }M_{v}u).$ Now, $M_{u}=C_{u}\Lambda
_{u}C_{u}^{\prime },$ where $\Lambda _{u}=diag(1,1,...,1,1,0)$ and $%
C_{u}C_{u}^{\prime }=C_{u}^{\prime }C_{u}=I.$ Hence $M_{p}=u^{\prime
}uC_{u}(\Lambda _{u}-(u^{\prime }P_{v}u/u^{\prime }u)I)C_{u}^{\prime
}/(u^{\prime }M_{v}u).$ Furthermore, $M_{u}u=0,$ $M_{u}=C_{u}\Lambda
_{u}C_{u}^{\prime }$ and $C_{u}^{\prime }C_{u}=I$ imply that $[C_{u}^{\prime
}u]_{k}=0$ for $k=1,2,...,q-1$ and $[C_{u}^{\prime }]_{q,.}=u^{\prime }/%
\sqrt{u^{\prime }u}.$ Finally, let $\lambda =1+u^{\prime }P_{v}u/(u^{\prime
}M_{v}u)=u^{\prime }u/(u^{\prime }M_{v}u)$ and $\Lambda
=diag(0,0,...,0,0,\lambda ).$ Then we obtain $M_{p}=C_{u}(I_{q}-\Lambda
)C_{u}^{\prime }$ and$\vspace{-0.1in}$%
\begin{gather*}
\Delta =u^{\prime }\tilde{W}^{1/2}M_{\tilde{W}^{1/2}v}\tilde{W}%
^{1/2}C_{u}(I_{q}-\Lambda )C_{u}^{\prime }\tilde{W}^{1/2}M_{\tilde{W}^{1/2}v}%
\tilde{W}^{1/2}u= \\
u^{\prime }\tilde{W}^{1/2}M_{\tilde{W}^{1/2}v}\tilde{W}^{1/2}\tilde{W}%
^{1/2}M_{\tilde{W}^{1/2}v}\tilde{W}^{1/2}u-\lambda ([C_{u}^{\prime }]_{q,.}%
\tilde{W}^{1/2}M_{\tilde{W}^{1/2}v}\tilde{W}^{1/2}u)^{2}= \\
u^{\prime }\tilde{W}^{1/2}M_{\tilde{W}^{1/2}v}\tilde{W}^{1/2}\tilde{W}%
^{1/2}M_{\tilde{W}^{1/2}v}\tilde{W}^{1/2}u-(u^{\prime }\tilde{W}^{1/2}M_{%
\tilde{W}^{1/2}v}\tilde{W}^{1/2}u)^{2}/(u^{\prime }M_{v}u).
\end{gather*}

To prove $\Delta \geq 0$ for all $W$, it is sufficient to show that $%
(u^{\prime }M_{v}u)\Delta \geq 0$ for all $\tilde{W}.$ Using $%
u=P_{v}u+M_{v}u,$ we obtain $M_{\tilde{W}^{1/2}v}\tilde{W}^{1/2}u=M_{\tilde{W%
}^{1/2}v}\tilde{W}^{1/2}M_{v}u$ and $(u^{\prime }M_{v}u)\Delta =(u^{\prime }%
\tilde{W}^{1/2}M_{\tilde{W}^{1/2}v}\tilde{W}^{1/2}\tilde{W}^{1/2}M_{\tilde{W}%
^{1/2}v}\tilde{W}^{1/2}u)(u^{\prime }M_{v}M_{v}u)-(u^{\prime }\tilde{W}%
^{1/2}M_{\tilde{W}^{1/2}v}\tilde{W}^{1/2}M_{v}u)^{2}.$ It follows from
applying the Cauchy-Schwarz inequality that $(u^{\prime }M_{v}u)\Delta \geq
0 $ for all $\tilde{W}$ and hence $\Delta \geq 0$ for all $W.$ We conclude
that when $q>p>1$ and $u^{\prime }v\neq 0,$ then $\Psi _{p}^{-1}$ is still
the optimal (in the sense of limiting MSE minimising) weight matrix for%
\textbf{\ }$\widehat{\phi }_{p}$ and that the limiting MSE of $\widehat{\phi 
}_{p}$ in this case is given by $(G^{\prime }\Psi _{p}^{-1/2}M_{d}(\Psi
_{p}^{-1})\Psi _{p}^{-1/2}G)^{-1/2}\sqrt{2/\pi }.$\quad ${\tiny \blacksquare 
}$\vspace{-0.1in}

\end{document}